\definecolor{mycolor}{rgb}{0.122, 0.435, 0.698}
\definecolor{darkgreen}{RGB}{50,190,50}
\definecolor{darkblue}{RGB}{0,0,190}
\definecolor{darkred}{RGB}{238,0,0}
\definecolor{mycolor}{rgb}{0.122, 0.435, 0.698}
\newtheorem{theorem}{Theorem}
\newtheorem{proposition}{Proposition}
\newtheorem{definition}{Definition}
\newtheorem{lemma}{Lemma}
\newtheorem{corollary}{Corollary}
\newcommand{\superUprime}{\hat{\mathcal{U}}^{\prime}}
\newcommand{\superU}{\hat{\mathcal{U}}}
\newcommand{\superD}{\mathcal{D}}
\newcommand{\superLambda}{\Lambda}
\newcommand{\nTG}{n_{\hspace*{1pt}\scriptscriptstyle\mathrm{TG}}} 
\newcommand{\Tr}{\mathrm{Tr}}
\newcommand{\HH}{\mathcal{H}}
\newcommand{\RR}{\mathcal{R}}
\newcommand{\UU}{\mathcal{U}}
\newcommand{\OO}{\mathcal{O}}
\newcommand{\ins}[1]{\textcolor{blue}{#1}}
\begin{document}

\title{Exploring Noisy Quantum Thermodynamical Processes via the Depolarizing-Channel Approximation}

\author{Jian Li\,\orcidlink{0009-0003-8877-0869}}
\affiliation{Technische Universit{\"a}t Wien, Atominstitut \& Vienna Center for Quantum Science and Technology (VCQ),  Stadionallee 2, 1020 Vienna, Austria}
\affiliation{Department of Physics, University of Vienna, Boltzmanngasse 5, 1090 Vienna, Austria}

\author{Xiaoyang Wang \,\orcidlink{0000-0002-2667-1879}}
\affiliation{RIKEN Center for Interdisciplinary Theoretical and Mathematical Sciences (iTHEMS), Wako 351-0198, Japan}
\affiliation{RIKEN Center for Computational Science (R-CCS), Kobe 650-0047, Japan}

\author{Marcus Huber \,\orcidlink{0000-0003-1985-4623}}

\affiliation{Technische Universit{\"a}t Wien, Atominstitut \& Vienna Center for Quantum Science and Technology (VCQ),  Stadionallee 2, 1020 Vienna, Austria}
\affiliation{Institute for Quantum Optics and Quantum Information - IQOQI Vienna, Austrian Academy of Sciences, Boltzmanngasse 3, 1090 Vienna, Austria}

\author{Nicolai Friis\,\orcidlink{0000-0003-1950-8640}}

\affiliation{Technische Universit{\"a}t Wien, Atominstitut \& Vienna Center for Quantum Science and Technology (VCQ),  Stadionallee 2, 1020 Vienna, Austria}

\author{Pharnam Bakhshinezhad\,\orcidlink{0000-0002-0088-0672}}
\email{pharnam.bakhshinezhad@tuwien.ac.at} 
\affiliation{Technische Universit{\"a}t Wien, Atominstitut \& Vienna Center for Quantum Science and Technology (VCQ),  Stadionallee 2, 1020 Vienna, Austria}

\begin{abstract}  
Noise and errors are unavoidable in any realistic quantum process, including processes designed to reduce noise and errors in the first place. In particular, quantum thermodynamical protocols for cooling can be significantly affected, potentially altering both their performance and efficiency. Analytically characterizing the impact of such errors becomes increasingly challenging as the system size grows, particularly in deep quantum circuits where noise can accumulate in complex ways. To address this, we introduce a general framework for approximating the cumulative effect of gate-dependent noise using a global depolarizing channel. We specify the regime in which this approximation provides a reliable description of the noisy dynamics. Applying our framework to the thermodynamical two-sort algorithmic cooling (TSAC) protocol, we analytically derive its asymptotic cooling limit in the presence of noise. Using the cooling limit, the optimal cooling performance is achieved by a finite number of qubits{\textemdash}distinguished from the conventional noiseless TSAC protocol by an infinite number of qubits{\textemdash}and fundamental bounds on the achievable ground-state population are derived. This approach opens new avenues for exploring noisy quantum thermodynamical processes.
\end{abstract}

\maketitle 


\section{Introduction}

{\noindent}Quantum information processing requires precise control over fragile quantum systems. Tasks such as computation, communication, and sensing often assume ideal operations such as perfect gates, measurements, and timing. In reality, noise from imperfect control and
interactions with the environment inevitably degrade performance. To compensate for noise it is paramount to understand these imperfections, identify error sources, characterize their effect, and develop suitable mitigation strategies. 
Provided that error probabilities are below certain thresholds, quantum error correction can effectively combat the corresponding errors (see, e.g.,~\cite{LidarBrun2013,Terhal2015,Roffe2019}). 
Yet, devices operating in the noisy intermediate-scale quantum (NISQ)~\cite{Preskill:2018} regime studied over the past decade have still been far away from these thresholds, and only recently first devices have been reported to operate at or below threshold~\cite{GoogleQuantumAI2025}.
In addition, many practical computations are performed without the protection of quantum error correction due to severely limited quantum computational resources. 
However, in some cases the effects of noise on the results of quantum computations can be predicted and partially 
ameliorated using quantum error-mitigation schemes~\cite{Kandala2017,TemmeBravyiGambetta2017,EndoBenjaminLi2018,FunckeHartungJansenKuehnStornatiWang2022,CaiBabbushBenjaminEndoHugginsLiMcCleanOBrien2023} that leverage physically inspired assumptions on the noise models. 
Indeed, such mitigation schemes have  already been applied to demonstrate the utility of NISQ devices~\cite{Kim2023,Cao2023,Guo2024}. 

In this work, using similar physically inspired assumptions, we study the effect of general noise on a process that is itself a primitive of most (if not all) of the aforementioned tasks: the preparation of sufficiently pure initial states. In the context of quantum thermodynamics this task is routinely referred to as cooling~\cite{Reeb_2014, Clivaz_2019, Taranto_2023}, as other pure states can be reached from the ground state by suitable unitary transformations. 
Specifically, we focus on  unitary thermodynamical processes implemented via quantum circuits, with finite resources that can be systematically tracked. 
While realistic noise, which takes into account effects such as dephasing, imprecise timekeeping, and gate imperfections is difficult to treat analytically, the depolarizing noise model captures the average effect of more realistic noise models and allows for analytical evaluation~\cite{Urbanek_21,Vovrosh_21,TakahashiTakeuchiTani2020}. 
These averaging effects have been analyzed using random quantum circuits~\cite{Dalzell_24}. However, it is unclear how reasonable this approach is for quantum circuits with fixed gates, which are used in many state-preparation and evolution circuits, as the averaging of realistic noise as well as the corresponding requirements on the quantum circuits have not yet been systematically studied. Here, we therefore introduce and employ a framework based on what we refer to as the global depolarizing approximation (GDA) to describe how realistic noise is averaged in deep quantum circuits with fixed gates. We show that the depolarizing noise well approximates realistic noise as long as the circuit depth grows polynomially with the system size.

Using the GDA, we can efficiently predict the noise effect in deep quantum circuits. As an example, we apply the GDA to the two-sort algorithmic cooling (TSAC) protocol~\cite{Raeisi2019, Taranto_2020}. We derive an analytical expression for its steady-state cooling limit under noise and validate our model by comparing it with numerical simulations of gate-dependent noise and a mirror-cooling protocol~\cite{Oftelie2024}. Our framework effectively predicts final cooling limits, cooling dynamics, and optimal qubit numbers under noise, and extends to other noise types via a generalized depolarizing approximation.\\[-2mm]


\section{The Global Depolarizing Approximation}
\label{sec:gda}

{\noindent}Noise in quantum circuits can be classified into coherent and incoherent noise. Coherent noise due to inaccurate quantum control preserves state purity, while incoherent noise reduces purity mainly coming from randomness in quantum operations and entanglement of qubits with their environment. By applying additional random Pauli gates, coherent noise can be efficiently converted to incoherent noise using randomized compiling~\cite{Wallman2016,Cai_2020}. Thus, we consider incoherent noise throughout this work, which can be modeled as follows
\begin{definition}[Circuit with incoherent noise]
\label{def:noisy_circuit}
A circuit with incoherent noise is composed of $L$ noisy gates and represented by a superoperator $\hat{\mathcal{U}}^{\,\prime} = \bigcirc_{l=1}^L \mathcal{M}_{g_l}$. Each noisy gate $\mathcal{M}_{g_l}$ has a subscript $g_l$ denoting the basic gate type and the qubits it is applied on. Without loss of generality, $\mathcal{M}_{g_l}$ has the decomposition $\mathcal{M}_{g_l} = \mathcal{E}_{g_l} \circ \hat{\mathcal{U}}_{g_l}$~\cite{Magesan2011}, where $\hat{\mathcal{U}}_{g_l}(\rho) = U_{g_l} \rho \,U_{g_l}^{\dagger}$ is the noiseless gate. The error channel $\mathcal{E}_{g_l}$ is given by
\begin{equation} \label{eq:error_channel_form}
    \mathcal{E}_{g_l}(\rho) = (1-p_{g_l}) \rho + p_{g_l} \Lambda_{g_l}(\rho),
\end{equation}
where  $0<p_{g_l}<1$ is the error probability, and $\Lambda_{g_l}(\rho)\neq \rho$ is the noise superoperator of the gate type $g_l$. Additionally, we assume that quantum gates of the same type but applied to different qubits have the same $p_{g_l}$ and the same $\Lambda_{g_l}$.
\end{definition}

The incoherent noise channels $\Lambda_{g_l}$ in a deep circuit can be averaged to a depolarizing channel (Fig.~\ref{fig:illustration}). To this end, we introduce two additional assumptions: \vspace{-2mm}
\begin{enumerate}[(i)]
\item{\label{item i}
Two-qubit gates are the dominant source of error, and only one two-qubit gate type $g$ is used to compose the circuit.  }\\[-3mm]
\item{\label{item ii}
The circuit is sufficiently deep, and its noiseless gate set $\{U_{g_l}\}$ is random enough to form a unitary 2-design, i.e., it resembles a Haar-random circuit ensemble up to the second-order moments~\cite{Dankert2009}.}\\[-4mm]
\end{enumerate}

\vspace{-2mm}
The first assumption (\ref{item i}) holds for most architectures of digital quantum computing~\cite{Arute2019, pino2021, Jurcevic_2021}, since one two-qubit gate type is enough for the universality of the basic gate set~\cite{Nielsen2000}. With this assumption, only one type of gates has non-zero $p_{g_l}$ in Eq.~\eqref{eq:error_channel_form} and one $\Lambda_{g_l}$ is non-trivial. To simplify the notation, the non-zero $p_{g_l}$ and the non-trivial $\Lambda_{g_l}$ are denoted by~$p$ and $\Lambda$, respectively. The second assumption (\ref{item ii}) holds for many dynamic and state-preparation circuits~\cite{Wang:2022OZ,PhysRevResearch.7.023032,yada2025nonhaarrandomcircuitsform}. 
Using these two assumptions, we derive the following theorem:
 
\begin{theorem}[Global depolarizing approximation]
\label{thm:gda}
Under the assumptions~(\ref{item i}) and~(\ref{item ii}), the noisy circuit superoperator $\hat{\mathcal{U}}^{\,\prime}$ (Definition~\ref{def:noisy_circuit}) acting on an initial state $\rho_0$ can be approximated by a global depolarizing channel:
\begin{equation} \label{eq:global_depolarizing_approx}
    \hat{\mathcal{U}}^{\,\prime}(\rho_0) \approx (1 - \eta ) \,\hat{\mathcal{U}}(\rho_0) + \eta \,\tfrac{\mathbb{I}}{d},
\end{equation}
where $\hat{\mathcal{U}}=\bigcirc_{l=1}^L \hat{\mathcal{U}}_{g_l}$ is the superoperator representing the ideal unitary circuit, $d=2^n$ is the Hilbert-space dimension, and~$\eta$ is the effective depolarizing strength given by
\begin{equation}
    \eta := p \, \nTG (1-q).
    \label{eq:depolarizing-probability}
\end{equation}
Here, $\nTG$ is the total number of two-qubit gates in $\hat{U}$, $q:= [\Tr(\Lambda) - 1]/(d^2 -1)$ is a parameter related to the trace of the noise process $\Lambda$ of the two-qubit gates~\cite{Emerson_2005}.
\end{theorem}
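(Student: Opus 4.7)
The plan is to expand the noisy superoperator in powers of $p$, exploit the 2-design structure to twirl each noise insertion into a depolarizing channel, and then resum the resulting binomial series.

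First I would expand. By assumption~(\ref{item i}), only the $L=\nTG$ two-qubit gates carry noise, and each has the two-term form $\mathcal{M}_{g_l} = [(1-p)\,\mathcal{I} + p\,\superLambda]\circ \superU_{g_l}$. Composing yields
\begin{equation}
\superUprime = \sum_{S\subseteq\{1,\ldots,L\}} p^{|S|}(1-p)^{L-|S|}\,\mathcal{N}_S ,
\end{equation}
where $\mathcal{N}_S$ is the ideal unitary circuit with one copy of $\superLambda$ inserted after every gate indexed by $S$. By inserting $\hat{\mathcal{V}}_j^{-1}\circ\hat{\mathcal{V}}_j$, with $\hat{\mathcal{V}}_j$ the sub-circuit from the start up to the $j$th noise position, each $\mathcal{N}_S$ can be rewritten as $\superU$ composed with a product of conjugations $\hat{\mathcal{V}}_j^{-1}\circ\superLambda\circ\hat{\mathcal{V}}_j$.

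Next I would twirl each conjugation via assumption~(\ref{item ii}). The standard identity
\begin{equation}
\int\!dV\;\hat{\mathcal{V}}^{-1}\circ\superLambda\circ\hat{\mathcal{V}} \;=\; q\,\mathcal{I} + (1-q)\,\superD,
\end{equation}
with $\superD(\rho)=\tfrac{\mathbb{I}}{d}\Tr(\rho)$ and $q=[\Tr(\superLambda)-1]/(d^2-1)$, replaces every noise insertion by a convex combination of the identity channel and the full depolarizing channel. Since $\superD$ is absorbed by any subsequent unitary or noise action (as $\superLambda(\mathbb{I}/d)=\mathbb{I}/d$), a realization with $k=|S|$ noise events collapses on average to $q^{k}\,\superU(\rho_0)+(1-q^{k})\,\mathbb{I}/d$. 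Summing over $S$ with the binomial weights above yields
\begin{equation}
\superUprime(\rho_0)\approx\bigl(1-p(1-q)\bigr)^{L}\superU(\rho_0)+\bigl[1-(1-p(1-q))^{L}\bigr]\tfrac{\mathbb{I}}{d},
\end{equation}
whose first-order expansion in $pL(1-q)$ is exactly Eq.~\eqref{eq:global_depolarizing_approx} with $\eta=p\,\nTG(1-q)$.

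The hard part will be justifying the twirling rigorously: the $\hat{\mathcal{V}}_j$ are fixed products of native gates rather than Haar-random, so the replacement of each conjugation by its Haar average is only valid at the level of second-order moments, which is precisely what a 2-design guarantees. I would control the deviation using frame-potential estimates for the native gate set and argue that it stays subdominant to the leading $p\,\nTG(1-q)$ contribution provided the typical inter-event spacing $\sim 1/p$ comfortably exceeds the depth required for 2-design convergence. This simultaneously fixes the regime of validity stated in the main text: $\eta\ll 1$ together with a circuit depth growing only polynomially in the system size.
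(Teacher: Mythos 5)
Your proposal follows essentially the same route as the paper: binomial expansion of the product of noisy gates, commuting each noise insertion through the ideal circuit to obtain conjugations $\superD_j^{\dagger}\circ\superLambda\circ\superD_j$, twirling these into a depolarizing channel via the 2-design assumption, and reading off $\eta = p\,\nTG(1-q)$ at first order. There is, however, one step where your argument overreaches. You replace \emph{each individual} conjugation $\hat{\mathcal{V}}_j^{-1}\circ\superLambda\circ\hat{\mathcal{V}}_j$ by its Haar average $q\,\mathcal{I}+(1-q)\,\superD$. For a fixed circuit, a single conjugation by a fixed prefix unitary is just another channel and is generally far from depolarizing; a 2-design only guarantees that the \emph{average over the ensemble} $\{\superD_j^{(k)}\}_{j=1}^{m_k}$ reproduces the Haar twirl, not that each element does. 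The paper sidesteps this by first truncating the binomial series at order $p$, where the surviving term is precisely the uniform average $\frac{1}{\nTG}\sum_j \superD_j^{\dagger}\superLambda\superD_j$ over single noise positions, so the 2-design assumption applies directly to that sum. Your resummed expression $(1-p(1-q))^{L}$ would additionally require the joint ensemble of prefix pairs, triples, etc.\ to factorize like independent designs (effectively a higher-design or independence assumption), which assumption~(ii) does not provide; since the theorem only claims the result to first order with an $O(p^{2}\nTG^{2})$ error, you should truncate before twirling rather than twirl termwise and resum. Your closing remark about frame potentials and the inter-event spacing is a reasonable way to quantify the residual 2-design error, and is consistent with the paper's separate depth analysis via Hoeffding's inequality.
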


\begin{figure}[t!]
    \centering
    \includegraphics[scale=0.75]{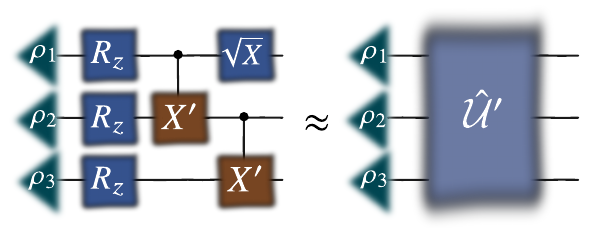}
    \caption{\label{fig:illustration}
    Global Depolarizing Approximation. Local noises of two-qubit CNOT gates are twirled by the quantum circuit with fixed gates (left panel), making them effectively act as a global depolarizing channel $\hat{\UU}'$ on the observable  (right panel).}
\end{figure}

The proof of this theorem is given in Appendix~\ref{app:proof of theorem 1}.
From Theorem~\ref{thm:gda} we see that, when the assumptions~(\ref{item i}) and~(\ref{item ii}) hold, Eq.~\eqref{eq:global_depolarizing_approx} can be the key formula to predict the effect of noise in deep quantum circuits. For example, the depolarizing channel in Eq.~\eqref{eq:global_depolarizing_approx} allows us to establish a relation between the noiseless expectation and the noisy one, i.e.,
\begin{equation}
    \Tr(O\hat{\mathcal{U}}(\rho_0)) = ({1-\eta})^{-1}\,{\Tr(O\hat{\mathcal{U}}^{\,\prime}(\rho_0))}
    \label{eq:noisy-expectation}
\end{equation}
for a Pauli\ins{-}string observable~$O$, and we can directly calculate~$\eta$ based on the number of two-qubit gates $\nTG$ and the noise model $\Lambda$ using Eq.~\eqref{eq:depolarizing-probability}. Previous works use Eq.~\eqref{eq:noisy-expectation} to predict noiseless expectation from the noisy measurement results, conditioned on an additional estimation circuit to effectively evaluate~$\eta$~\cite{Urbanek_21,Vovrosh_21}. Here, we do not rely on the additional estimation circuit, but use Eq.~\eqref{eq:depolarizing-probability} instead that can have less measurement consumption.


\emph{Validity of GDA}{\textemdash}The GDA relies on the circuit's gate set forming an approximate unitary 2-design. On the one hand, the exact 2-design requires a circuit depth that scales exponentially with the system size~\cite{9435351}. On the other hand, if we focus only on estimating physical quantities such as the fidelity and observable expectations with an additive error, it can be shown that the required circuit depth grows polynomially with the system size.

Using Hoeffding's inequality~\cite{Hoeffding01031963}, we prove in Appendix~\ref{app:Achievability of the 2-design approximation},
that to estimate an observable~$O$ within an additive error $\|\langle O\rangle_{\text{meas}}-\langle O\rangle\|\leq \varepsilon$ and a confidence level $1-\delta$, the required circuit depth scales as
\begin{align}
    \text{depth}\sim n(n-1)\frac{2\|O\|}{\varepsilon^2}\ln\left(\frac{2}{\delta}\right), 
\end{align}
where~$n$ is the number of qubits and $\|O\|$ is the spectral norm of the observable~$O$. For fidelity estimation and typical physical observables, where $\|O\|=1$ and $\|O\|$ scales polynomially with~$n$, respectively, the depth grows polynomially in $n$. Thus the GDA is scalable to large system sizes, and is validated in quantum algorithms with a polynomially increased circuit depth, as required by many dynamic and state-preparation quantum algorithms~\cite{RevModPhys.86.153,Daley_2022,RevModPhys.90.015002}. Even if the circuit is not deep enough, one can artificially increase the circuit depth without changing the ideal circuit $\hat{\mathcal{U}}$ using, e.g., unitary gate folding~\cite{TemmeBravyiGambetta2017,hour2024improvingzeronoiseextrapolationquantumgate}. 


\section{Application to Two-Sort Algorithmic Cooling}
\label{sec:application_tsac}

{\noindent}We now demonstrate the utility of the GDA by applying it to analyze the TSAC protocol~\cite{Raeisi2019}, a quantum algorithmic-cooling method to increase the purity of quantum states. We first review the ideal noiseless protocol, then show how our noise model modifies its core dynamics, and finally derive the analytical cooling limit in this realistic setting.\\[-3mm]

The TSAC protocol improves the purity and polarization of $n_{\mathrm{c}}$ computational qubits by introducing an additional reset qubit. The total system therefore consists of $n = n_{\mathrm{c}} + 1$ qubits. The total Hilbert space is $\HH = \HH_{\mathrm{C}}\otimes\HH_{\mathrm{R}}$ representing the $n_{\mathrm{c}}$-qubit computational space $\HH_{\mathrm{C}}$ and the Hilbert space $\HH_{\mathrm{R}}$ of the single reset qubit, respectively. The protocol iteratively applies the following two steps.\\[-2mm]

{\noindent}\textbf{1. Reset step:} The reset qubit is reset to a fresh thermal state $\rho_{\mathrm{R}} := z^{-1} \operatorname{diag}(e^{\epsilon}, e^{-\epsilon})$, 
where $z= e^{\epsilon} + e^{-\epsilon}$ and $\epsilon>0$ is the polarization parameter of the reset qubit.\\[-3mm]

This operation traces out the current state of the reset qubit while preserving the computational subsystem, and then initializes the reset qubit to the fresh thermal state $\rho_R$. Mathematically, this channel $\mathcal{R}$ acting on the full density matrix $\rho$ is defined as $\mathcal{R}(\rho) := \rho_C \otimes \rho_R$, where $\rho_C = \Tr_R(\rho)$ denotes the reduced density matrix of the computational qubits, consistent with the Hilbert space decomposition $\mathcal{H} = \mathcal{H}_C \otimes \mathcal{H}_R$.\\[-2mm]

{\noindent}\textbf{2. Compression step:} The compression unitary $U_{\mathrm{TS}}$ acts on the computational and reset qubits written in the computational basis $\{\ket{0},\ket{1}\}^{\otimes n_{\mathrm{c}}}\otimes \{\ket{0},\ket{1}\}$

\begin{equation}\label{eq:compression_op}
U_{\mathrm{TS}} = 1 \oplus \sigma_x \oplus \cdots \oplus \sigma_x \oplus 1 ,
\end{equation}
where $\sigma_x$ is the Pauli-$X$ matrix and the first and the last elements of the matrix are one. This matrix is a $2^{n}\times 2^{n}$ block-diagonal matrix with $(2^{n_{\mathrm{c}}}-1)$ Pauli-$X$ matrices on its diagonal. The ideal noiseless TSAC protocol iteratively applies $\mathcal{R}$ and $U_{\mathrm{TS}}$ to the density matrix of the~$n$ qubits. 

To study the cooling effect of TSAC, we focus on the density matrix of the $n_{\mathrm{c}}$ computational qubits. The dynamics of the diagonal density-matrix elements can be described as a Markov chain. We denote the $d_{\mathrm{c}}=2^{n_{\mathrm{c}}}$ diagonal elements after $t$ iterations of TSAC as $\vec{v}^{\hspace*{2pt}t}=\{v_1^{\,t}, \dots, v_{d_{\mathrm{c}}}^{\,t} \}$. The transition from $\vec{v}^{\hspace*{2pt}t}$ to $\vec{v}^{\hspace*{2pt}t+1}$ can be described by a transition matrix $T$ given in~\cite{Raeisi2019, Farahmand2022} , such that $\vec{v}^{\hspace*{2pt}t+1} = T\vec{v}^{\hspace*{2pt}t}$. In the cooling limit $t\to \infty$, the steady state $\vec{v}^{\hspace*{2pt}\infty}$ is the eigenvector of~$T$ with eigenvalue~$1$. 

The effect of noise on the TSAC protocol can be studied in the GDA framework. The TSAC protocol can be translated to the digital quantum-circuit model by decomposing $U_{\mathrm{TS}}$ into CNOT and single-qubit gates~\cite{Raeisi2019}, and the resulting $n$-qubit quantum circuit has a layered structure that is sufficiently deep and random to satisfy assumptions (\ref{item i}) and~(\ref{item ii}), see Appendix~\ref{app:derivation_gda} for details.
In particular, the noise after one TSAC iteration can be effectively described by a global depolarizing channel given by Eq.~\eqref{eq:global_depolarizing_approx}, acting on the full $n$-qubit space, with depolarizing strength~$\eta$. This depolarizing channel modifies the matrix $T$ to the noisy $d_{\mathrm{c}}\times d_{\mathrm{c}}$ transition matrix $T^{\,\prime}_{\eta}$ as
\begin{equation}\label{eq:transition_noisy_main}
        T^{\,\prime}_{\eta}=\begin{pmatrix}
        a_{\eta,+} & a_{\eta,+} & \eta/d_{\mathrm{c}} & \cdots & \eta/d_{\mathrm{c}} & \eta/d_{\mathrm{c}} \\[1mm]
        a_{\eta,-} & \eta/d_{\mathrm{c}} & a_{\eta,+} & \cdots & \eta/d_{\mathrm{c}} & \eta/d_{\mathrm{c}} \\[1mm]
        \eta/d_{\mathrm{c}} & a_{\eta,-} & \eta/d_{\mathrm{c}} & \cdots & \eta/d_{\mathrm{c}} & \eta/d_{\mathrm{c}} \\[1mm]
        \vdots & \vdots & \vdots & \ddots & \vdots & \vdots \\[1mm]
        \eta/d_{\mathrm{c}} & \eta/d_{\mathrm{c}} & \eta/d_{\mathrm{c}} & \cdots & \eta/d_{\mathrm{c}} & a_{\eta,+}\\[1mm]
        \eta/d_{\mathrm{c}} & \eta/d_{\mathrm{c}} & \eta/d_{\mathrm{c}} & \cdots & a_{\eta,-} & a_{\eta,-}
        \end{pmatrix},
\end{equation}
where $a_{\eta,\pm} := (\eta/d_{\mathrm{c}})+(1-\eta)e^{\pm\epsilon}/z$. It can be checked that $T^{\,\prime}_{\eta=0}$ is the noiseless transition matrix~$T$. Similar to the noiseless case, the cooling limit of TSAC in the presence of noise is obtained from the eigenvector of $T^{\,\prime}_{\eta}$ with eigenvalue~$1$. Thus we arrive at the following proposition.

\begin{proposition}[Cooling limit of noisy TSAC]
\label{prop:noisy_limit}
The TSAC protocol comprising iterative applications of $\RR$, $U_{\mathrm{TS}}$, and a global depolarizing channel with strength $\eta$ has a unique steady state as the cooling limit.  The steady state is a  probability vector $\vec{v}=\{v_1,\ldots, v_{d_{\mathrm{c}}}\}$ with components\vspace{-2mm}
\begin{equation}\label{eq:limit_noisy}
    v_k = z_1 \lambda_1^{k-1} + z_2 \lambda_2^{k-1} + \frac{1}{d_{\mathrm{c}}}.
\end{equation}
The parameters in this solution are fully determined by the physical properties of the system: $(i)$ The eigenvalues $\lambda_1$ and $\lambda_2$ depend on the noise strength~$\eta$, the polarization $\epsilon$, and the number of computational qubits $n_{\mathrm{c}}$. The eigenvalues are related by the noise-independent constraint $\lambda_1 \lambda_2 = e^{-2\epsilon}$. $(ii)$ The coefficients $z_1$ and $z_2$ are normalization factors determined by $\lambda_1$ and $\lambda_2$, and thus also depend on $(\eta, \epsilon, n_{\mathrm{c}})$.
\end{proposition}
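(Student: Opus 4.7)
My approach is to reduce the stationarity equation $T^{\,\prime}_{\eta}\vec{v}=\vec{v}$ to a second-order homogeneous linear recurrence, solve it in closed form, and extract both Vieta's relation and the structural ansatz of the proposition. Uniqueness of the cooling limit comes first: for $\eta\in(0,1)$ every entry of $T^{\,\prime}_{\eta}$ is strictly positive, so $T^{\,\prime}_{\eta}$ is primitive and column-stochastic, and the Perron--Frobenius theorem guarantees that the eigenvalue~$1$ is simple with a unique probability eigenvector.

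For the explicit form I would introduce the abbreviation $\alpha_{\pm}:=(1-\eta)e^{\pm\epsilon}/z$, so that $a_{\eta,\pm}=\eta/d_{\mathrm{c}}+\alpha_{\pm}$ and $\alpha_{+}+\alpha_{-}+\eta=1$. Writing the $k$-th row equation of $T^{\,\prime}_{\eta}\vec v=\vec v$ for an interior index $2\le k\le d_{\mathrm{c}}-1$, and using $\sum_j v_j=1$ to collapse the uniform $\eta/d_{\mathrm{c}}$ contributions from the non-neighbouring entries, produces $\alpha_{-}v_{k-1}+\alpha_{+}v_{k+1}+\eta/d_{\mathrm{c}}=v_{k}$. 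Substituting $u_{k}:=v_{k}-1/d_{\mathrm{c}}$ and invoking $\alpha_{+}+\alpha_{-}+\eta=1$ cancels the inhomogeneous constant, leaving the clean recurrence $\alpha_{+}u_{k+1}-u_{k}+\alpha_{-}u_{k-1}=0$. Its characteristic polynomial $\alpha_{+}\lambda^{2}-\lambda+\alpha_{-}=0$ has two roots $\lambda_{1},\lambda_{2}$ obeying $\lambda_{1}\lambda_{2}=\alpha_{-}/\alpha_{+}=e^{-2\epsilon}$ by Vieta's formula, which is exactly the noise-independent constraint claimed. The discriminant $1-4\alpha_{+}\alpha_{-}=1-4(1-\eta)^{2}/z^{2}$ is strictly positive whenever $(\epsilon,\eta)\neq(0,0)$ since $z\ge 2$ and $(1-\eta)^{2}\le 1$ with simultaneous equality only in the trivial case, so the roots are distinct and $u_{k}=z_{1}\lambda_{1}^{k-1}+z_{2}\lambda_{2}^{k-1}$, which when shifted back gives the claimed form $v_{k}=z_{1}\lambda_{1}^{k-1}+z_{2}\lambda_{2}^{k-1}+1/d_{\mathrm{c}}$.

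To pin down $z_{1}$ and $z_{2}$, I would combine the two boundary equations---the $k=1$ and $k=d_{\mathrm{c}}$ rows of $T^{\,\prime}_{\eta}\vec v=\vec v$---with normalization $\sum_k v_k=1$. Translated into the $u_{k}$ variables, the boundary conditions become inhomogeneous linear constraints on $(u_{1},u_{2})$ and $(u_{d_{\mathrm{c}}-1},u_{d_{\mathrm{c}}})$ with right-hand sides proportional to $(\alpha_{+}-\alpha_{-})/d_{\mathrm{c}}$; one of the three conditions is redundant by the column-stochasticity of $T^{\,\prime}_{\eta}$, so any two determine $z_{1},z_{2}$ uniquely as rational functions of $\lambda_{1},\lambda_{2},d_{\mathrm{c}}$, and hence of $(\eta,\epsilon,n_{\mathrm{c}})$. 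The main obstacle I anticipate is the boundary bookkeeping: the first and last rows have a non-standard structure (the entry $a_{\eta,\pm}$ occupies the diagonal rather than $\eta/d_{\mathrm{c}}$), so turning them into conditions on the $u_k$ requires careful use of $\alpha_{+}+\alpha_{-}=1-\eta$, and reducing the normalization to a closed-form expression requires the geometric-series identity $\sum_{k=0}^{d_{\mathrm{c}}-1}\lambda^{k}=(\lambda^{d_{\mathrm{c}}}-1)/(\lambda-1)$ applied to both roots.
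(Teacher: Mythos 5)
Your proposal is correct and follows essentially the same route as the paper's Appendix B: reduce $T^{\,\prime}_{\eta}\vec v=\vec v$ to a second-order recurrence using $\sum_k v_k=1$, homogenize via the shift $v_k-1/d_{\mathrm{c}}$, read off $\lambda_1\lambda_2=e^{-2\epsilon}$ from Vieta's formulas, and fix $z_1,z_2$ from the boundary rows (your $\alpha_\pm$ parametrization is equivalent to the paper's $E$, $C$ variables). Your Perron--Frobenius argument for uniqueness and your explicit check that the discriminant is positive (so the roots are distinct) are small additions the paper leaves implicit, and they strengthen rather than alter the argument.
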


{\noindent}\textit{Sketch of Proof.}{\textemdash}The proof involves solving the steady-state eigenvalue problem $T^{\,\prime}_{\eta}\vec{v} = \vec{v}$. This matrix equation translates into a system of linear equations, which can be formulated as a second-order linear inhomogeneous recurrence relation for the components $v_k'$. The general solution to such a recurrence relation is the sum of a particular solution (a constant related to the uniform background $1/d_{\mathrm{c}}$) and a homogeneous solution, which is a linear combination of terms involving the two roots, $\lambda_1$ and $\lambda_2$, of the characteristic equation. The coefficients of this combination, $z_1$ and $z_2$, are then uniquely determined by the boundary conditions of the system, as detailed in 
Appendix~\ref{app:derivation_noisy_limit}. 
 \qed \\[-2mm] 

\noindent This analytical solution reveals the physical mechanism by which noise limits algorithmic cooling. The steady-state distribution is a superposition of a uniform background and two dynamic modes. The leading-order behavior in the noise strength~$\eta$ is $\lambda_1 = 1 + \eta/\tanh\epsilon +\OO(\eta^2) > 1$ and $\lambda_2=e^{-2\epsilon}(1-\eta/\tanh \epsilon)+\OO(\eta^2) < 1$.\\[-3mm]

In the ideal case ($\eta=0$), $\lambda_1$ becomes exactly~$1$, and its coefficient is $z_1 = -1/d_{\mathrm{c}}$. This causes the constant mode to cancel the uniform background $1/d_{\mathrm{c}}$, leaving only the exponentially decaying term driven by $z_2 > 0$. The result is the ideal cooling limit, where cooling performance improves exponentially with the number of qubits~\cite{Raeisi2019}.\\[-3mm]

However, for any non-zero noise, the presence of the exponentially growing mode ($\lambda_1 > 1$) alters the behavior for a large number of computational qubits~$n_{\mathrm{c}}$. To satisfy the physical constraints of probability positivity and normalization, the amplitudes of both modes, $|z_1|$ and $|z_2|$, must be suppressed as the system size $d_{\mathrm{c}}=2^{n_{\mathrm{c}}}$ increases. This forces the entire distribution to flatten towards the uniform high-temperature state, $v_k \approx 1/d_{\mathrm{c}}$.\\[-3mm]

This behavior is in contrast to the ideal scenario where increasing~$n_{\mathrm{c}}$ always improves cooling performance. Our result shows a trade-off: Noise-induced suppression counteracts the benefits of a larger number of qubits. Thus, Proposition~\ref{prop:noisy_limit} implies the existence of an optimal qubit number for which the best cooling performance is achieved. In the following, we will numerically study this optimal regime.\\[-3mm]


{\noindent}\emph{Optimal cooling performance with physical noise{\textemdash}}To verify our analytical results, we numerically study the TSAC protocol with the depolarizing and realistic physical noise models. The depolarizing model is the theoretical noise model leading to Proposition~\ref{prop:noisy_limit}. For the physical noise model,  we consider the gate-level timekeeping noise which dominates the qubit decoherence~\cite{Xureb2023}.

For the physical timekeeping noise with explicit form given in 
Appendix~\ref{app:specific_noise}, 
we calculate its GDA depolarizing strength using Theorem~\ref{thm:gda}, resulting in the following corollary.

\begin{corollary}[GDA parameter for timekeeping noise]
\label{cor:timekeeping_noise}
Consider the two-qubit timekeeping-noise channel for each CNOT gate with noise strength~$p$ (for the explicit form we refer to 
Appendix~\ref{app:specific_noise}). 
The effective global depolarizing strength~$\eta_{\hspace*{0.75pt}\scriptscriptstyle\mathrm{T}}$ (acting on the full $n$-qubit system) in Theorem~\ref{thm:gda} is\vspace{-3mm}
\begin{equation}
    \eta_{\hspace*{0.75pt}\scriptscriptstyle\mathrm{T}} = p \cdot \nTG \cdot \frac{3d^2}{4(d^2 - 1)},
\end{equation}
where $\nTG$ is the number of CNOT gates in the $n$-qubit quantum circuit and $d=2^n$ is the total dimension.
\end{corollary}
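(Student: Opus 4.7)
The plan is to apply Theorem~\ref{thm:gda} directly, so the task reduces to evaluating the factor $1-q$ for the particular two-qubit timekeeping channel $\Lambda_{\mathrm{T}}$. Unpacking the defining relation $q = [\Tr(\Lambda)-1]/(d^{2}-1)$, the target identity $1-q = 3d^{2}/[4(d^{2}-1)]$ is equivalent to the single scalar statement $\Tr(\Lambda) = d^{2}/4$ on the full $n$-qubit space, which is what I would aim to establish.

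First I would read off the explicit Kraus operators of $\Lambda_{\mathrm{T}}$ from Appendix~\ref{app:specific_noise}. Since the CNOT error channel is local, acting only on the two qubits involved in the gate, while Theorem~\ref{thm:gda} lives on the full $n$-qubit register, I would lift it by tensoring with the identity channel on the remaining $n-2$ qubits, $\Lambda = \Lambda_{\mathrm{T}} \otimes \mathcal{I}^{\otimes(n-2)}$. Using multiplicativity of the superoperator trace under tensor products of channels, together with the fact that the identity channel on dimension $d^{\prime}$ has superoperator trace $(d^{\prime})^{2}$, one obtains $\Tr(\Lambda) = \Tr(\Lambda_{\mathrm{T}})\cdot (d/4)^{2}$. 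The required identity $\Tr(\Lambda)=d^{2}/4$ therefore reduces to the purely two-qubit statement $\Tr(\Lambda_{\mathrm{T}}) = 4$.

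The remaining step is to compute $\Tr(\Lambda_{\mathrm{T}})$ via the standard identity $\Tr(\Lambda_{\mathrm{T}}) = \sum_{a}|\Tr(K_a)|^{2}$ for any Kraus set $\{K_a\}$ of $\Lambda_{\mathrm{T}}$, or equivalently as the sum of diagonal entries of its Pauli transfer matrix. Since timekeeping errors act as dephasing-like or small over-rotation perturbations around the CNOT generator, only a handful of Pauli strings should contribute with non-trivial weight, and the sum should collapse to the numerical value $4$. Substituting back into $\eta = p\,\nTG (1-q)$ from Theorem~\ref{thm:gda} then yields $\eta_{\hspace*{0.75pt}\scriptscriptstyle\mathrm{T}} = p\cdot\nTG\cdot 3d^{2}/[4(d^{2}-1)]$, as claimed.

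The main obstacle is the bookkeeping for $\Tr(\Lambda_{\mathrm{T}})$: one must commit to the precise operational definition of the timekeeping noise (whether it acts as independent single-qubit dephasings on control and target, as a correlated two-qubit Pauli channel, or as random over-rotations of the CNOT generator), because the numerical coefficient $3/4$ is set by which Pauli strings appear with non-trivial weight in the Pauli transfer matrix. Once the correct Kraus data from Appendix~\ref{app:specific_noise} is fixed, everything else is a short algebraic substitution into the formula of Theorem~\ref{thm:gda}, with no further twirling arguments or approximation-control estimates needed.
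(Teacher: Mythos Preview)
Your proposal is correct and follows essentially the same route as the paper: apply Theorem~\ref{thm:gda}, reduce to computing $\Tr(\Lambda)$, and use the identity $\Tr(\Lambda)=\sum_a|\Tr(K_a)|^2$. The paper's proof (Lemma~\ref{lem:timekeeping_noise}) differs only cosmetically: it computes the trace of the single $n$-qubit Kraus operator $K=K_{2q}\otimes I_{n-2}$ directly as $\Tr(K)=\Tr(K_{2q})\cdot 2^{n-2}=2\cdot d/4=d/2$, whence $\Tr(\Lambda)=d^2/4$, rather than first factoring the superoperator trace as you do; your reduction to $\Tr(\Lambda_{\mathrm{T}})=4$ on the two-qubit space is equivalent, since $K_{2q}=|1\rangle\!\langle 1|\otimes X+|0\rangle\!\langle 0|\otimes I$ has $\Tr(K_{2q})=2$ and hence $|\Tr(K_{2q})|^2=4$.
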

\noindent We use $\eta_{\hspace*{0.75pt}\scriptscriptstyle\mathrm{T}}$ to predict the numerical results of algorithmic cooling with timekeeping noise.\\[-3mm]

As discussed before, Proposition~\ref{prop:noisy_limit} implies an optimal qubit number and a corresponding optimal cooling performance due to the trade-off between cooling power and noise accumulation. The cooling performance can be quantified by the final ground-state population of the target qubit (defined as the first computational qubit). $P_n:=\lim_{\hspace*{0.5pt}t\to \infty}\Tr(\rho_{\hspace*{0.5pt}t}\ket{0}\!\!\bra{0})$ where $\rho_{\hspace*{0.5pt}t}$ is the density matrix of the target qubit after~$t$ TSAC iterations, and we define the optimal cooling performance $P_{\max}:=\max_n P_n$ and the optimal qubit number $n_\text{opt} := \mathrm{argmax}_n P_n$. The values of~$n_\text{opt}$ and~$P_{\max}$ change for different noise strengths. Studying these quantities provides guidance for the practical implementation of the TSAC protocol on noisy quantum hardware with some specific noise strength.\\[-3mm]

To determine~$n_\text{opt}$ and~$P_{\max}$, we simulate the TSAC protocol for various noise strengths and total number of qubits. The dependence of the optimal performance metrics~$n_\text{opt}$ and~$P_{\max}$ versus error probability are summarized in Fig.~\ref{fig:optimal_summary}. We assume that the initial state $\rho_i$ of the target qubit has the ground-state population $P_{\text{initial}}=\Tr(\rho_i\ket{0}\!\!\bra{0})=0.85$. Figure~\ref{fig:optimal_summary}~(a) displays the optimal qubit number $n_\text{opt}$ as a function of the CNOT error probability~$p$. The value of $p$ serves as the fundamental noise parameter for both curves. The ``physical'' curve (circles) shows the result of a direct gate-level simulation using the timekeeping-noise model (see
Appendix~\ref{app:specific_noise}) 
with strength~$p$. The ``theoretical'' curve (squares) shows the analytical prediction from Proposition~\ref{prop:noisy_limit}, which uses the GDA strength $\eta = \eta_{\hspace*{0.75pt}\scriptscriptstyle\mathrm{T}}$ calculated from~$p$ via Corollary~\ref{cor:timekeeping_noise}. Figure~\ref{fig:optimal_summary}~(b) shows the corresponding maximum achievable ground\ins{-}state population $P_\text{max}$ with the corresponding $n_\text{opt}$ from Fig.~\ref{fig:optimal_summary}~(a). Results comparing the physical noise simulation and the GDA model show agreement, validating our GDA framework's prediction on algorithmic cooling. Additionally, we see that both~$P_\text{max}$ and~$n_\text{opt}$ decrease monotonically as the error probability increases, which is consistent with the intuition that noise causes the cooling performance to deviate from the ideal noiseless case. Specifically, currently available superconducting chips can achieve $p\sim 10^{-3}$~\cite{PRXQuantum.6.010349,Sete2024}, with the optimal qubit number~$n_\text{opt}=3$. Therefore, unlike in the noiseless case, under the current noise level, further increasing the total number~$n$ of qubits does not improve the cooling performance. 

\begin{figure}[t!]
    \centering
    \includegraphics[scale=1]{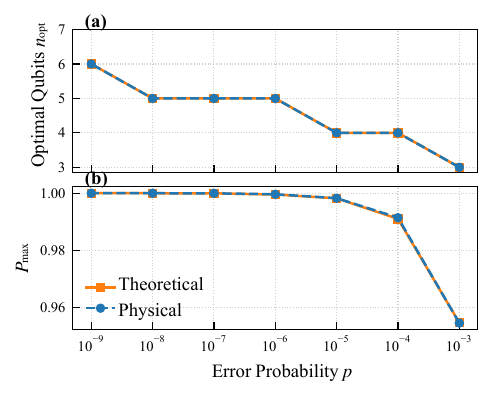}\vspace{-2mm}
    \caption{\label{fig:optimal_summary}
    Summary of optimal performance versus CNOT error probability $p$ (logarithmic scale). 
    \textbf{(a)} Optimal number of total qubits ($n_\text{opt}$) required to maximize the final ground-state population. 
    \textbf{(b)} Corresponding maximally achievable final ground-state population ($P_\text{max}$). 
    All panels compare results from physical noise simulations (circles, dashed lines) and the theoretical GDA model (squares, solid lines).
    The initial ground-state population is $P_{\text{initial}}=0.85$.}
\end{figure}

\section{Application to Dynamic Cooling}
\label{sec:application_dc}

{\noindent}To further analyze the performance of the GDA framework, 
we consider a different class of algorithms known as single-shot or dynamic cooling (DC)~\cite{Schulman1999, Oftelie2024,lin2024thermodynamic,rodriguez2020novel}. Unlike the iterative TSAC protocol using one auxiliary reset qubit,
DC protocols apply a global unitary transformation only once across the target qubit and a set of auxiliary qubits to transfer entropy from the target to the auxiliaries.\\[-3mm]

Specifically, the DC process begins with an initial state given by the tensor product of~$n$ identical thermal states, $\rho_{\hspace*{0.5pt}\text{th}}$, that is
    $\rho_{\hspace*{0.5pt}\text{in}}^{\phantom{\otimes n}} = \rho_{\hspace*{0.5pt}\text{th}}^{\otimes n}$.
A global unitary operator, $U_{\scriptscriptstyle\text{DC}}$, is then applied to the entire system. The final, cooled state of the target qubit, $\rho_{\hspace*{0.5pt}\scriptscriptstyle\mathrm{T}}$, is obtained by tracing out the $n-1$ auxiliary qubits (labeled here as subsystems $2$,$3$, \ldots, $n$):
\begin{equation}
    \rho_{\hspace*{0.5pt}\scriptscriptstyle\mathrm{T}} = \Tr_{2,3,\ldots,n} \left( U_{\scriptscriptstyle\text{DC}}^{\phantom{\dagger}} 
    \,\rho_{\hspace*{0.5pt}\text{in}} \,U_{\scriptscriptstyle\text{DC}}^{\dagger} \right).
\end{equation}
Several specific unitaries $U_{\scriptscriptstyle\text{DC}}$ have been proposed~\cite{Oftelie2024}. For our analysis, we use the computationally efficient mirror protocol~\cite{Oftelie2024}.\\[-3mm]


{\noindent}\emph{Numerical Verification for DC}{\textemdash}We verify the GDA model against a physical noise simulation for the DC mirror protocol. To better resolve performance differences in the high-purity regime and to maintain consistency with related work~\cite{Oftelie2024}, we use the final effective temperature of the target qubit as the performance metric instead of its ground-state population. This effective temperature, $\mathcal{T}_{\text{eff}}$, is derived from the final diagonal elements ($p_0, p_1$) of the target qubit's density matrix $\rho_{\hspace*{0.5pt}\scriptscriptstyle\mathrm{T}}$ in the energy basis (the eigenbasis of $H = \tfrac{1}{2}\hbar\omega\sigma_z$~\cite{Oftelie2024}) via the Boltzmann relation $p_1/p_0 = e^{-\hbar\omega/(k_{\hspace*{0.5pt}\scriptscriptstyle\mathrm{B}} \mathcal{T}_{\text{eff}})}$, where $k_{\hspace*{0.5pt}\scriptscriptstyle\mathrm{B}}$ is the Boltzmann constant.\\[-3mm]

\begin{figure}[t]
    \centering
    \includegraphics[width=\linewidth]{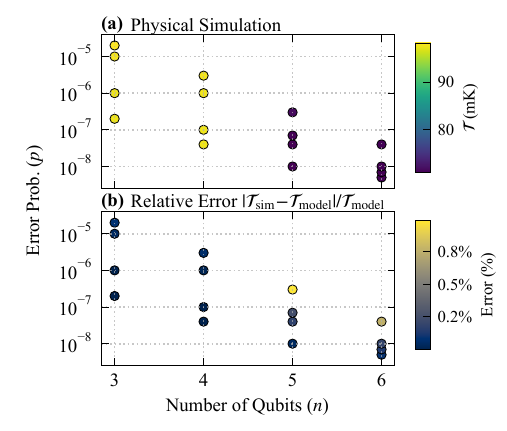}\vspace{-2mm} 
    \caption{Verification of the GDA model for the DC mirror protocol across a range of qubit numbers~$n$ and CNOT error probabilities~$p$. The panels show: (a) the final effective temperature $\mathcal{T}_{\text{sim}}$ obtained from a physical noise simulation; (b) The relative error, $|\mathcal{T}_{\text{sim}}-\mathcal{T}_{\text{model}}|/\mathcal{T}_{\text{model}}$, where $\mathcal{T}_{\text{model}}$ is the result from GDA. Temperatures in (a) are shown in mK; (b) is shown as a percentage. The plots demonstrate agreement of the model with the simulation.}
    \label{fig:dc_mirror_comparison} 
\end{figure}

For these simulations, the qubits are initialized thermal states $\rho_{\hspace*{0.5pt}\text{th}} = e^{-\beta H} / \mathcal{Z}$, where $\mathcal{Z} = \Tr(e^{-\beta H})$ and $\beta = 1/(k_{\hspace*{0.5pt}\scriptscriptstyle\mathrm{B}} \mathcal{T}_{\text{initial}})$~\cite{Oftelie2024}. In the numerical simulation, we use $\omega = 10 \, \mathrm{GHz}$ and an initial temperature $\mathcal{T}_{\text{initial}}=163$~mK ($P_{\text{initial}} \approx 0.95$), which is consistent with typical parameters for superconducting
qubit quantum computers~\cite{Oftelie2024}.

Figure~\ref{fig:dc_mirror_comparison} shows this final effective temperature (in mK) as a function of the qubit number~$n$ and the CNOT error probability~$p$. Figures~\ref{fig:dc_mirror_comparison}~(a) shows the results obtained from the physical noise simulation ($\mathcal{T}_{\text{sim}}$).
The final effective temperature predicted by GDA is denoted as $\mathcal{T}_{\text{model}}$.
We observe that the relative error of the final temperature, $|\mathcal{T}_{\text{sim}}-\mathcal{T}_{\text{model}}|/\mathcal{T}_{\text{model}}$, shown in Fig.~\ref{fig:dc_mirror_comparison}, is at the level of 1\%. This result validates our GDA framework's applicability to the DC mirror protocol. \\[-2mm]


\section{Conclusion and Outlook}
\label{sec:conclusion}

{\noindent}We introduced the GDA, a scalable theoretical framework for analyzing the cumulative effects of generic physical noise in deep quantum circuits. The GDA approximates complex local noise channels with a single, system-wide depolarizing channel, requiring a circuit depth that only scales polynomially with the system size to emulate a unitary 2-design. Applying the GDA to the TSAC protocol, we derived a novel analytical expression (Proposition~\ref{prop:noisy_limit}) for the asymptotic cooling limit under realistic noise. This analysis identified a fundamental noise-induced trade-off: The exponential gain in ideal cooling power with increasing qubit number~$n$ is limited by the exponential increase in noise accumulation in deep quantum circuits. This competition dictates an optimal qubit number of the cooling protocol and sets a sharp upper bound on the achievable ground-state population (Fig.~\ref{fig:optimal_summary}). The accuracy of the GDA in predicting this optimal boundary was also validated against detailed physical noise simulations for both TSAC and DC protocols.

A key direction for future research is the broad application of the GDA framework to other noisy input-output circuit-based protocols, which include most quantum thermodynamical processes. This approach offers a powerful avenue for the analytical investigation of fundamental performance bounds~\cite{Reeb_2014, Lipka_Bartosik_2024, Xuereb_2025} for thermodynamic tasks under realistic noise. Crucially, the GDA offers a critical tool for quantifying the thermodynamic cost of imperfect control. Physical constraints inherent in finite-resource implementations directly lead to operational errors using the depolarizing strength~$\eta$ featuring in the GDA. By analytically linking the thermodynamic resources (e.g., energy, time, or control precision)~\cite{silva2024optimalunitarytrajectoriescommuting, Taranto_2023, Xuereb_2023} required to achieve a target error tolerance to~$\eta$, this framework facilitates the rigorous study of fundamental resource-error-performance trade-offs{\textemdash}a necessary step toward optimizing finite-resource quantum heat engines.


\begin{acknowledgments}
{\noindent}The authors thank Nayeli Rodr\'{\i}guez-Briones, Jake Xuereb, and Philip Taranto for insightful discussions. 
We acknowledge funding from the European Research Council (Consolidator grant ‘Cocoquest’ 101043705), from the Austrian Science Fund (FWF) through the stand-alone project P 36478-N funded by the European Union{\textemdash}NextGenerationEU, as well as by the Austrian Federal Ministry of Education, Science and Research via the Austrian Research Promotion Agency (FFG) through the flagship project FO999897481 (HPQC), the project FO999914030 (MUSIQ), the project FO999921407 (HDcode), and the project FO999921415 (Vanessa-QC) funded by the European Union{\textemdash}NextGenerationEU.  X.W. is supported by the RIKEN TRIP initiative (RIKEN Quantum) and the University of Tokyo Quantum Initiative. 
\end{acknowledgments}


\bibliographystyle{apsrev4-1fixed_with_article_titles_full_names_new}
\bibliography{bibliography.bib}

\clearpage
\onecolumngrid
\appendix

\newpage
\onecolumngrid
\appendix

\section*{Appendices}
\setcounter{section}{0}
\tableofcontents

\setcounter{equation}{0}
\setcounter{figure}{0}
\setcounter{table}{0}

\setcounter{section}{0}

\renewcommand{\theequation}{A\arabic{equation}}
\renewcommand{\thefigure}{A\arabic{figure}}
\renewcommand{\thetable}{A\arabic{table}}

\section{Derivation of the Global Depolarizing Approximation}
\label{app:derivation_gda}
{\noindent}In this section, we provide the detailed derivation for the global depolarizing approximation (GDA). A key assumption in the proof of the GDA is the randomness of the ideal gate set. In the next subsection, we prove that this randomness can be achieved using moderate-depth quantum circuits as long as our aim is to estimate the expectation value of some observables or the state fidelity.


\subsection{Proof of the GDA}\label{app:proof of theorem 1}
{\noindent}The theorem of the GDA is as follows:\\

\noindent\textbf{Theorem 1.}
\label{thm:gda_restated} 
\textit{
Under the assumptions that 
(i) noise acts identically on all two-qubit gates with error probability $p$, and 
(ii) the ideal unitary circuit forms an approximate 2-design, 
the noisy circuit superoperator $\hat{\mathcal{U}}^{\,\prime}$, defined explicitly as
\begin{equation}
    \hat{\mathcal{U}}^{\,\prime} = \bigcirc_{l=1}^L \left[ (1-p_{g_l})\hat{u}_l + p_{g_l}(\Lambda_{g_l} \circ \hat{u}_l) \right],
\end{equation}
acting on an initial state $\rho_0$, can be approximated by a global depolarizing channel:
\begin{equation} \label{eq:global_depolarizing_approx_sm}
    \hat{\mathcal{U}}^{\,\prime}(\rho_0) \approx (1 - \eta ) \,\hat{\mathcal{U}}(\rho_0) + \eta \,\tfrac{\mathbb{I}}{d},
\end{equation}
where $\hat{\mathcal{U}}=\bigcirc_{l=1}^L \hat{\mathcal{U}}_{g_l}$ is the superoperator representing the ideal unitary circuit, $d=2^n$ is the Hilbert-space dimension, and $\eta$ is the effective depolarizing strength given by
\begin{equation}
    \eta := p \, \nTG (1-q).
    \label{eq:depolarizing-probability_sm}
\end{equation}
Here, $\nTG$ is the total number of two-qubit gates in the circuit, and $q:= [\Tr(\Lambda) - 1]/(d^2 -1)$ is a parameter related to the trace of the noise process $\Lambda$ of the two-qubit gates~\cite{Emerson_2005}.
}
\begin{proof}
{\noindent}The starting point is the full expression for the noisy circuit superoperator:
\begin{equation}\label{app:eq:noisy_circuit_expanded}
\hat{\mathcal{U}}^{\prime}=\bigcirc_{l=1}^{L} [(1-p_{g_{l}})\hat{u}_{l}+p_{g_{l}}(\Lambda_{g_{l}}\circ\hat{u}_{l})], 
\end{equation}
where $L$ is the total number of quantum gates. Using our assumptions above that only two-qubit gates have the same noise strength, we set the error probability $p_{g_l}=p$ if the gate at location $l$ is a two-qubit gate, and $p_{g_l}=0$ otherwise. We define $\nTG$ as the total number of such noisy two-qubit gates. Expanding Eq.~\eqref{app:eq:noisy_circuit_expanded} using the binomial theorem gives:
\begin{equation}\label{app:ExpandNoisyProduct}
    \superUprime = P(0)\,\superU + \sum_{M=1}^{\nTG} P(M)\, \superU \,\frac{1}{\binom{\nTG}{M}} \sum_{\substack{1 \le j_0 < j_1 < \dots \\ < j_{M-1} \leq L}} \left( \bigcirc_{\alpha=0}^{M-1} \mathcal{L}_{j_\alpha} \right),
\end{equation}
where $P(M) = \binom{\nTG}{M}p^M(1-p)^{\nTG-M}$ is the binomial probability. The term $\mathcal{L}_{j_\alpha} := \superD_{j_{\alpha}}^{\dagger}\superLambda_{j_{\alpha}}\superD_{j_{\alpha}}$ represents the effective error channel: $\superLambda_{j_{\alpha}}:=\Lambda_{g_{j_\alpha}}$ is the ``bare'' noise channel at the $j_\alpha$-th gate site where the two-qubit gate applies, and $\superD_{j_{\alpha}}$ is the composite noiseless evolution defined by $\superD_{j_\alpha} := \hat{u}_{j_\alpha} \circ \hat{u}_{j_\alpha-1} \circ \dots \circ \hat{u}_1$.

With a small error probability~$p$, the leading-order binomial series in Eq.~\eqref{app:ExpandNoisyProduct} reads 
\begin{equation}\label{app:eq:first_order}
\begin{split}
\hat{\mathcal{U}}^{\,\prime}(\rho_0) = (1-p\,\nTG) \,\hat{\mathcal{U}}(\rho_0) + p\,\nTG\, \hat{\mathcal{U}} \mathcal{S}(\rho_0) + O(p^{2} \nTG^{2}),
\end{split}
\end{equation}
with higher order terms being truncated as long as $p \cdot \nTG \ll 1$. To simplify the notation for the average error operator, at this step we re-index the sequence of these $n_{TG}$ noisy gates as $j = 1, \dots, n_{TG}$. The leading order term is abbreviated by defining the average error operator:
\begin{equation}
\mathcal{S} := \frac{1}{\nTG}\sum_{j=1}^{\nTG} \mathcal{L}_j = \frac{1}{\nTG}\sum_{j=1}^{\nTG}\mathcal{D}_j^\dagger \Lambda_j \mathcal{D}_j. 
\end{equation}
Under our assumption, we focus solely on the noise affecting two-qubit gates and universal gate sets with only one type of two-qubit gate. It follows that all $\Lambda_j$s represent the same noise channel, differing only in the specific pair of qubits upon which they act. Since the number of qubits~$n$ is finite, the possible number of error channels $\Lambda_i$ is finite.  That is, $\Lambda_j \in \{\Lambda^{(1)}\dots\Lambda^{(K)}\}$ where $K$ is the total number of possible two-qubit gates of a digital quantum system and $k \in \{1, \dots, K\}$ indexes the distinct qubit pairs in the system. For example, a digital system with the CNOT gate as the only two-qubit basic gate has $K=n(n-1)$. Suppose that  $\Lambda^{(j)} \neq \mathbb{I}$ appears $m_{j}$ times in the summation $\mathcal{S}$ , which satisfies $\sum_{k=1}^{K} m_k = \nTG$. Then, $\mathcal{S}$ can be re-written as: 

\begin{equation}\label{summation}
    \mathcal{S} = \sum_{k=1}^{K}p_{k}\cdot\frac{1}{m_{k}}\sum_{j=1}^{m_{k}}\mathcal{D}_{j}^{(k)\dagger}\Lambda^{(k)}\mathcal{D}_{j}^{(k)},
\end{equation}
where $p_k := m_k/\nTG$ is normalized as $\sum_{k=1}^{K} p_k=1$.

This average error operator is reduced to the global depolarizing channel by adopting the assumption (ii). With a sufficiently deep quantum circuit, the associated set of operators $\{\mathcal{D}_j^{(k)}\}_{j=1}^{m_k}$ in Eq.~\eqref{summation} forms an approximate 2-design such that~\cite{Dankert2009} 
\begin{equation}\label{eq:2-design}
\frac{1}{m_k}\sum_{j=1}^{m_{k}}\mathcal{D}_{j}^{(k)\dagger}\Lambda^{(k)}\mathcal{D}_{j}^{(k)}(\rho) = q_k\,\rho + (1-q_k)\frac{\mathbb{I}}{d},
\end{equation}
where $q$ is a parameter related to the trace of the noise process superoperator $\Lambda^{(k)}$~\cite{Emerson_2005}: 
\begin{equation} \label{eq:q_definition_2}
    q_k := \frac{\Tr(\Lambda^{(k)}) - 1}{d^2 -1},
\end{equation}
where $d=2^n$ is the dimension of the total Hilbert space of the quantum circuit. Given that the two-qubit gates are of the same type, $\Lambda^{(k)}$ on different qubits has the same trace, i.e., $\Lambda= \Lambda^{(k)}, \forall k \in [1,K]$, such that all $q_k$s have the same value of $q = q_k, \forall k \in [1,K]$. Thus we can simplify Eq.~(\ref{summation}) as:
\begin{equation}\label{eq:summation_final}
 \frac{1}{\nTG}\sum_{j=1}^{\nTG}\mathcal{L}_j(\rho) = q\rho + (1-q)\frac{\mathbb{I}}{d}.
\end{equation}
Combining this equation with Eq.~(\ref{app:eq:first_order}) gives
\begin{equation}
\hat{\mathcal{U}}^{\,\prime}(\rho_0) = \left[1 - p \, \nTG(1-q) \right] \hat{U}(\rho_0)+ p \, \nTG(1-q) \frac{\mathbb{I}}{d}+ O(p^{2}\nTG^{2}).
\end{equation}
This completes our proof of Theorem~\ref{thm:gda} in the main text.
\end{proof}

\subsection{Achievability of the 2-design approximation}\label{app:Achievability of the 2-design approximation}

{\noindent}During the proof of Theorem~\ref{thm:gda}, we require that the sets of operators $\{\mathcal{D}_j^{(k)}\}_{j=1}^{m_k}$ for all $k\in[1,K]$ are random enough, such that they form approximate unitary $2$-design. It has been shown that the non-Haar random circuits form unitary designs as fast as Haar random circuits~\cite{yada2025nonhaarrandomcircuitsform}. Thus, $\{\mathcal{D}_j^{(k)}\}$ are not required to be Haar random. However, even for Haar random circuits, such as Clifford circuits, their cardinality of $n$-qubit systems scales as $\mathcal{O}(4^{n^2})$~\cite{9435351}. Note that the number of elements $m_k$ in $\{\mathcal{D}_j^{(k)}\}$ is proportional to the circuit depth. Thus, one cannot efficiently achieve the unitary $2$-design by simply constructing quantum circuits with depth of $\mathcal{O}(4^{n^2})$. However, we prove in the following content that, if our aim is to estimate the expectation values of some observables or the state fidelity, the unitary $2$-design can be achieved using moderate circuit depth.

Assume that the GDA is used to estimate the expectation
\begin{align}
    \langle O\rangle_{m_k} := \Tr [\frac{1}{m_k}\sum_{j=1}^{m_{k}}\mathcal{D}_{j}^{(k)\dagger}\Lambda^{(k)}\mathcal{D}_{j}^{(k)}(\rho) O],
    \label{eq:O-m-k-definition}
\end{align}
where~$O$ is an Hermitian observable or a state projector $\ket{\psi}\bra{\psi}$ for fidelity estimation. Denote the exact 2-design results as $\langle O\rangle = \lim_{m_k\to\infty} \langle O\rangle_{m_k}$. The probability of deriving $\langle O\rangle_{m_k}$ outside the $\varepsilon$-neighbor of $\langle O\rangle$ is bounded by Hoeffding's inequality
\begin{align}
    \mathrm{Prob}[|\langle O\rangle_{m_k}-\langle O\rangle|\geq \varepsilon]\leq 2e^{-2m_k \varepsilon^2/(b-a)^2},
\end{align}
where $[a,b]$ is the range of each $\Tr [\mathcal{D}_{j}^{(k)\dagger}\Lambda^{(k)}\mathcal{D}_{j}^{(k)}(\rho) O]$ in Eq.~\eqref{eq:O-m-k-definition}, and is bounded by the spectral norm
\begin{align}
    |b-a|\leq 2\,\| O\|.
\end{align}
For example, the state projector has $\|\ket{\psi}\!\!\bra{\psi}\|=1$, and $\|O\|$ is typically $\mathcal{O}(\mathrm{poly}(n))$ for most observables in quantum many-body systems.

Suppose that we want 
\begin{align}
    \mathrm{Prob}[|\langle O\rangle_{m_k}-\langle O\rangle|\geq \varepsilon]\leq \delta,
\end{align}
where $\varepsilon$ denotes the accuracy of the estimate and $1-\delta$ represents the desired confidence level, where~$\delta$ is some constant, e.g., $\delta = 0.05$. Then, the required number of random samples $m_k$ can be obtained by setting $\delta = 2e^{-2m_k \varepsilon^2/(b-a)^2}$, such that
\begin{align}
    m_k = \frac{(b-a)^2}{2\varepsilon^2}\ln\left(\frac{2}{\delta}\right)\leq \frac{2\|O\|}{\varepsilon^2}\ln\left(\frac{2}{\delta}\right).
    \label{eq:mk-estimation}
\end{align}
This $m_k$ bound is explicitly independent of~$n$ for fidelity estimation, and increases at most polynomially to~$n$ for observable expectations. 

If a digital quantum chip has $\text{CNOT}$ gate as the only two-qubit basic gates, since $k$ takes $n(n-1)$ values, we can estimate the GDA circuit depth to achieve the approximate $2$-design:
\begin{align}
    \text{depth}\sim n(n-1)\frac{2\|O\|}{\varepsilon^2}\ln\left(\frac{2}{\delta}\right).
\end{align}
Thus, the unitary $2$-design can be achieved using polynomially increased circuit depth to evaluate fidelity and observable expectations. 


\subsection{Numerical Validation of the 2-design Approximation for TSAC}
\label{app:validation_tsac}

{\noindent}The GDA framework relies on the assumption that the set of prefix unitaries $\{\superD_j^{(k)}\}$ forms an approximate 2-design, which validates the channel-averaging approximation in Eq.~\eqref{eq:2-design}. To verify the effectiveness of this assumption for the specific $U_{\mathrm{TS}}$ unitary used in the TSAC protocol, we perform a direct numerical test.

We quantify the accuracy of the approximation in Eq.~\eqref{eq:2-design} by calculating the fidelity between its left-hand side (LHS) and right-hand side (RHS). We define the averaged channel output for a CNOT location $k$ as $\rho_{\mathrm{LHS}\_k} = \frac{1}{m_{k}}\sum_{j=1}^{m_{k}}\superD_{j}^{(k)\dagger}\Lambda^{(k)}\superD_{j}^{(k)}(\rho_0)$ and the GDA model's approximation as $\rho_{RHS} = q_{k}\rho_0+(1-q_{k})\frac{\mathbb{I}}{d}$. We then compute the average fidelity over all $K$ CNOT locations, $\langle F_k \rangle_k := \frac{1}{K} \sum_k F(\rho_{\mathrm{LHS}\_k}, \rho_{\mathrm{RHS}})$. 

As discussed in Appendix~\ref{app:Achievability of the 2-design approximation}, the quality of the 2-design approximation depends on the circuit depth, which determines the size $m_k$ of the averaging set $\{\superD_j^{(k)}\}$ and its randomness. We simulate increasing circuit depth by composing the $U_{\mathrm{TS}}$ circuit $R$ times since the TSAC is a multi-round protocol and the circuit applies multiple rounds on the computational qubits.

The results for $n=3$ qubits with an initial thermal state $\rho_0$ corresponding to a ground-state population $P=0.8$ are shown in Fig.~\ref{fig:gda_validation}. The $R=0$ case is a baseline, representing the fidelity between the un-averaged noise channel $\Lambda^{(k)}(\rho_0)$ (i.e., $\superD_j = \mathbb{I}$) and the GDA model. The fidelity at $R=0$ is low ($\approx 0.82$), indicating a deviation. However, applying the averaging effect provided by just one round of the TSAC circuit ($R=1$) causes a jump in fidelity to $\approx 0.925$. The fidelity converges to a value of $\approx 0.937$ for $R \ge 2$.

This result validates the 2-design assumption, Eq.~\eqref{eq:2-design}, for the circuit. While the fidelity in Fig.~\ref{fig:gda_validation} converges to $\langle F_k \rangle_k \approx 0.937$ rather than perfectly to 1, this high fidelity is sufficient to validate the GDA. The GDA framework itself is an approximation to first order in $p\,\nTG$, as shown in Eq.~\eqref{app:eq:first_order}, neglecting terms of $\mathcal{O}(p^2\nTG^2)$ and higher. The deviation of the 2-design approximation, Eq.~\eqref{eq:2-design}, which our numerical test quantifies as an infidelity $1-F \approx 0.063$, enters the final GDA formula only after being multiplied by the first-order prefactor $p\,\nTG$. Thus, this contributes only as a higher-order correction, $\mathcal{O}(p\,\nTG \cdot (1-F))$. This numerically confirms that the GDA does not require the circuit's prefix unitaries to form a perfect 2-design, but only one that is "good enough".

\begin{figure}[h!]
\includegraphics[width=8.6cm]{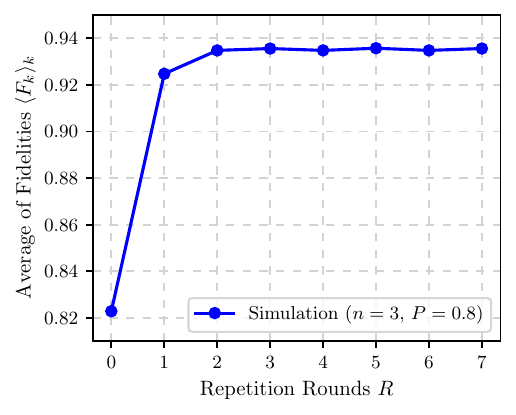} 
\caption{Numerical validation of the 2-design approximation, Eq.~\eqref{eq:2-design}, for the $n=3$ TSAC circuit. The plot shows the average fidelity $\langle F_k \rangle_k$ between the true averaged noise channel (LHS) and the GDA model (RHS) as a function of circuit depth, simulated by repetition rounds $R$. The initial state is a thermal state with $P=0.8$. The $R=0$ point represents the baseline fidelity without the averaging effect. The rapid convergence to high fidelity for $R \ge 1$ demonstrates the effectiveness of the 2-design assumption.}
\label{fig:gda_validation}
\end{figure}


\section{Derivation of the Analytical Cooling Limit (Proposition~\ref{prop:noisy_limit})}
\label{app:derivation_noisy_limit}

{\noindent}In this appendix, we provide the full derivation for the analytical cooling limit of noisy TSAC, as stated in Proposition~\ref{prop:noisy_limit}. We first construct the noisy transition matrix, then solve its eigenvalue problem to find the steady state, and finally show how this general solution recovers the ideal, noiseless limit.


\subsection{Constructing the Noisy Transition Matrix}
\label{app:derivation_noisy_matrix}

{\noindent}The dynamics of ideal TSAC acting on the vector of diagonal elements $\vec{v}$ (of the $n_{\mathrm{c}}$ computational qubits) can be described by the ideal $d_{\mathrm{c}} \times d_{\mathrm{c}}$ transition matrix~$T$, where $d_{\mathrm{c}} = 2^{n_{\mathrm{c}}}$~\cite{Raeisi2019, Farahmand2022}:
\begin{equation}\label{eq:transition_ideal_appendix}T=\frac{1}{z}\begin{pmatrix}e^\epsilon&e^\epsilon&0&\cdots&0\\e^{-\epsilon}&0&e^\epsilon&\cdots&0\\0&e^{-\epsilon}&0&\cdots&0\\ \vdots & \vdots & \vdots & \ddots & \vdots \\0&0&\cdots&e^{-\epsilon}&e^{-\epsilon}\end{pmatrix},
\end{equation}
where $z=e^{\epsilon}+e^{-\epsilon}$.

We focus on the dynamics of the $n_{\mathrm{c}}$ computational qubits. The GDA noise, calculated on the full $n$-qubit system, is here modeled as an effective channel $\mathcal{C}_\eta$ on the $d_{\mathrm{c}} = 2^{n_{\mathrm{c}}}$ dimensional computational subspace. The effect of this channel on the diagonal elements of the computational density matrix $\rho_C$ is given by
\begin{equation}
   \mathcal{C}_\eta(\rho_C) = (1-\eta)\rho_C + \eta \frac{I_{d_{\mathrm{c}}}}{d_{\mathrm{c}}},
\end{equation}
where~$\eta$ is the effective noise parameter. Since the TSAC dynamics can be described by the evolution of the vector of diagonal elements of the density matrix, $\vec{v}$, we first determine how $\mathcal{C}_\eta$ acts on this vector. 

Given a density matrix $\rho_C$ such that $\vec{v} = \text{diag}(\rho_C) =(v_1,v_2...,v_{d_{\mathrm{c}}})^T$, the effect of the depolarizing channel $\mathcal{C}_\eta$  can be regarded as a linear map:
\begin{equation}
\text{diag}(\rho_C)\xrightarrow{\mathcal{C}_\eta}(1-\eta) \text{diag}(\rho_C) + \frac{\eta}{d_{\mathrm{c}}}\sum_j^{d_{\mathrm{c}}} v_j(1,1...,1)^T .
\end{equation}
The matrix representation of this map is:
\begin{equation}
\mathbf{C}_\eta =
\begin{pmatrix}
1-\eta+\frac{\eta}{d_{\mathrm{c}}} & \frac{\eta}{d_{\mathrm{c}}} & \cdots & \frac{\eta}{d_{\mathrm{c}}} \\
\frac{\eta}{d_{\mathrm{c}}} & 1-\eta+\frac{\eta}{d_{\mathrm{c}}} & \cdots & \frac{\eta}{d_{\mathrm{c}}} \\
\vdots & \vdots & \ddots & \vdots \\
\frac{\eta}{d_{\mathrm{c}}} & \frac{\eta}{d_{\mathrm{c}}} & \cdots & 1-\eta+\frac{\eta}{d_{\mathrm{c}}}
\end{pmatrix}.
\end{equation}
The noisy TSAC iteration is a composition of the ideal TSAC channel (represented by matrix~$T$) and the noise channel (represented by matrix $\mathbf{C}_\eta$). This leads to the following lemma.

\begin{figure}[b]
    \centering 
    \begin{overpic}[width=8.6cm, trim=0 0 0 19, clip]{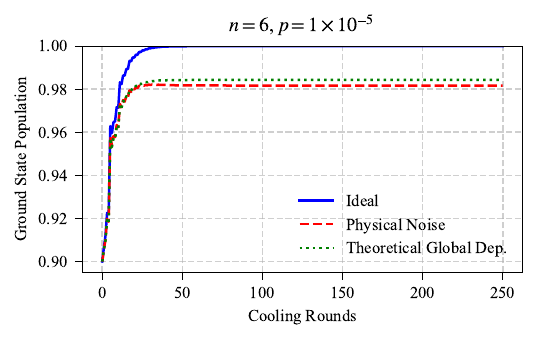}
        \put(15, 50){\textbf{(a)}}
    \end{overpic}
    \begin{overpic}[width=8.6cm, trim=0 0 0 17.5, clip]{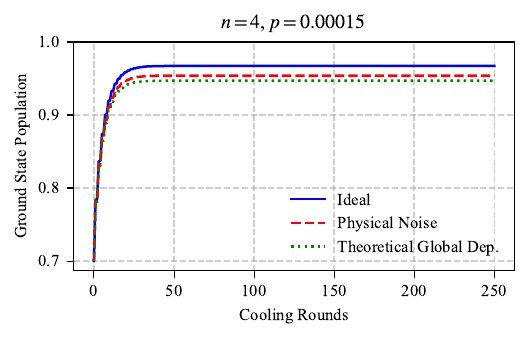}
        \put(15, 50){\textbf{(b)}}
    \end{overpic}
    
    \caption{Ground state population vs. cooling rounds.
    Simulations compare the ideal (noiseless) case, the gate-dependent timekeeping (physical) noise simulation, and the global depolarizing model (theoretical). 
    Results are shown for different numbers of qubits ($n$) and error probabilities ($p$): 
    (a) $n=6, p=1 \times 10^{-5}$. 
    (b) $n=4, p=1.5 \times 10^{-4}$.}
    \label{fig:cooling_rounds_all_cases}
\end{figure}

\begin{lemma}[Noisy Transition Matrix for TSAC]
\label{lem:noisy_matrix}
The transition matrix~$T^{\,\prime}_{\eta}$ for the TSAC protocol under the Global Depolarizing Approximation is given by $T^{\,\prime}_{\eta} = \mathbf{C}_\eta T$, where~$T$ is the ideal transition matrix from Eq.~\eqref{eq:transition_ideal_appendix}. This results in the $d_{\mathrm{c}} \times d_{\mathrm{c}}$ matrix:
\begin{equation}\label{app:Trans_matrix_noise}
        T^{\,\prime}_{\eta}=\left(\begin{array}{cccccc}
        a_{\eta,+}&a_{\eta,+}&\frac{\eta}{d_{\mathrm{c}}}&\cdots&\frac{\eta}{d_{\mathrm{c}}}\\
        a_{\eta,-}&\frac{\eta}{d_{\mathrm{c}}}&a_{\eta,+}&\cdots&\frac{\eta}{d_{\mathrm{c}}}\\
        \frac{\eta}{d_{\mathrm{c}}}&a_{\eta,-}&\frac{\eta}{d_{\mathrm{c}}}&\cdots&\frac{\eta}{d_{\mathrm{c}}}\\
        \vdots&\vdots&\ddots&\ddots&\vdots\\
        \frac{\eta}{d_{\mathrm{c}}}&\frac{\eta}{d_{\mathrm{c}}}&\cdots&a_{\eta,-}&a_{\eta,-}\end{array}\right),
\end{equation}
where $a_{\eta,\pm} := (1-\eta)\frac{e^{\pm\epsilon}}{z}+ \frac{\eta}{d_{\mathrm{c}}} $, $d_{\mathrm{c}} = 2^{n_{\mathrm{c}}}$ is the dimension of $\HH_C$. Here, $\eta$ is the parameter of the global depolarizing error.
\end{lemma}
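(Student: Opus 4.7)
The plan is to realize the noisy iteration as the product $\mathbf{C}_\eta T$ acting on the diagonal population vector $\vec v$ and then verify by direct entrywise computation that this product reproduces the displayed matrix in Eq.~\eqref{app:Trans_matrix_noise}. The key structural observation I would exploit is that $T$ in Eq.~\eqref{eq:transition_ideal_appendix} is column-stochastic, i.e.\ $\sum_k T_{kj} = 1$ for every $j$, which follows from the fact that the reset-plus-compression step is trace preserving and therefore probability preserving on the computational populations.

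With this observation in hand, I would plug $(\mathbf{C}_\eta)_{ik} = (1-\eta)\delta_{ik} + \eta/d_{\mathrm{c}}$ into the matrix product and obtain
\begin{equation}
(\mathbf{C}_\eta T)_{ij} = (1-\eta)\,T_{ij} + \frac{\eta}{d_{\mathrm{c}}}\sum_{k=1}^{d_{\mathrm{c}}} T_{kj} = (1-\eta)\,T_{ij} + \frac{\eta}{d_{\mathrm{c}}},
\end{equation}
so the effect of the GDA reduces to the uniform rule ``scale every entry of $T$ by $(1-\eta)$ and add $\eta/d_{\mathrm{c}}$''. Reading off the nonzero entries of $T$ from Eq.~\eqref{eq:transition_ideal_appendix} then yields, in each interior row, a sub-diagonal entry $a_{\eta,-} = (1-\eta)e^{-\epsilon}/z + \eta/d_{\mathrm{c}}$, a super-diagonal entry $a_{\eta,+} = (1-\eta)e^{\epsilon}/z + \eta/d_{\mathrm{c}}$, and $\eta/d_{\mathrm{c}}$ in every remaining column. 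The two boundary rows inherit the doubled entries of $T$: row~$1$ gives $(T^{\,\prime}_{\eta})_{1,1} = (T^{\,\prime}_{\eta})_{1,2} = a_{\eta,+}$ and row $d_{\mathrm{c}}$ gives $(T^{\,\prime}_{\eta})_{d_{\mathrm{c}},d_{\mathrm{c}}-1} = (T^{\,\prime}_{\eta})_{d_{\mathrm{c}},d_{\mathrm{c}}} = a_{\eta,-}$, matching Eq.~\eqref{app:Trans_matrix_noise} exactly.

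I do not expect any real analytical obstacle here, since the argument is essentially bookkeeping once column-stochasticity is noted. The only subtlety worth flagging is the passage from the depolarizing channel $\mathcal{C}_\eta$ on $d_{\mathrm{c}}$-dimensional density matrices to its matrix representation $\mathbf{C}_\eta$ on the diagonal vector: the contribution $\eta\,\mathbb{I}/d_{\mathrm{c}}$ produces an additive $\eta/d_{\mathrm{c}}$ in each component only because $\operatorname{Tr}(\rho_C) = \sum_j v_j = 1$ for every physical state, so $\mathbf{C}_\eta$ really does act as a fixed linear map on $\vec v$ rather than an affine one. The ordering $\mathbf{C}_\eta T$ (ideal step first, noise second) is fixed by the interpretation of the GDA as an effective channel absorbing all the gate-level noise accumulated during one TSAC iteration, which is the content of Theorem~\ref{thm:gda} applied per round.
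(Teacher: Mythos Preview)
Your proposal is correct and follows essentially the same route as the paper: both arguments write $(\mathbf{C}_\eta)_{ik}=(1-\eta)\delta_{ik}+\eta/d_{\mathrm{c}}$, invoke column-stochasticity of $T$ to collapse $\sum_k T_{kj}=1$, obtain $(T^{\,\prime}_\eta)_{ij}=(1-\eta)T_{ij}+\eta/d_{\mathrm{c}}$, and then read off the explicit entries from the structure of $T$. Your additional remarks on why $\mathbf{C}_\eta$ is genuinely linear on $\vec v$ and on the ordering $\mathbf{C}_\eta T$ are helpful clarifications but go slightly beyond what the paper records.
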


\begin{proof}
The elements of~$T^{\,\prime}_{\eta}$ are $(T^{\,\prime}_{\eta})_{ij} = \sum_k (\mathbf{C}_\eta)_{ik} T_{kj}$. Due to the structure of $\mathbf{C}_\eta$, this becomes $(T^{\,\prime}_{\eta})_{ij} = (1-\eta)T_{ij} + \frac{\eta}{d_{\mathrm{c}}} \sum_k T_{kj}$. Since each column of the ideal matrix~$T$ sums to 1, we have $\sum_k T_{kj} = 1$. Thus, $(T^{\,\prime}_{\eta})_{ij} = (1-\eta)T_{ij} + \frac{\eta}{d_{\mathrm{c}}}$. Applying this to the specific structure of~$T$ in Eq.~\eqref{eq:transition_ideal_appendix} yields the matrix elements in Eq.~\eqref{app:Trans_matrix_noise}. 
\end{proof}

While we focus on the steady state below, the matrix $T'$ in Eq.~(\ref{app:Trans_matrix_noise}) already allows us to predict the cooling dynamics at finite rounds. The dynamics of the cooling process are presented in Fig.~\ref{fig:cooling_rounds_all_cases}, showing the ground-state population as a function of cooling rounds. The noise is modeled using the gate-dependent timekeeping error channel described in Sec.~\ref{app:specific_noise}, characterized by the CNOT error probability $p$. Performance is evaluated using the target qubit's ground-state population $P_n$. In the figures, the label "theoretical" refers to predictions from the GDA model. "Physical" refers to results from gate-level simulations of timekeeping noise. 


\subsection{Solving for the Steady State}

{\noindent}Finding the cooling limit is equivalent to finding the eigenvector $\vec{v}$ of~$T^{\,\prime}_{\eta}$ with eigenvalue 1, i.e., solving $T^{\,\prime}_{\eta}\vec{v} = \vec{v}$~\cite{Raeisi2019}.
Let $E := \tanh(\epsilon)$ and $C := \frac{2\eta}{d_{\mathrm{c}}(1-\eta)}$.
The noisy transition matrix~$T^{\,\prime}_{\eta}$ can be rewritten as
\begin{align}
        T^{\,\prime}_{\eta}=\frac{(1-\eta)}{2}\left(\begin{array}{cccccc}
        1+E + C&1+E + C&C&\cdots&C\\
        1-E + C&C&1+E + C&\cdots&C\\
        C&1-E + C&C&\cdots&C\\
        C&C&\cdots&\ddots&\vdots\\
        C&C&\cdots&1-E + C&1-E + C\end{array}\right).
\end{align}
The matrix equation $T^{\,\prime}_{\eta}\vec{v} = \vec{v}$ can be rewritten as
\begin{equation}\label{eq:matrix_equation}
    \mathbf{M}\vec{v} = \frac{2}{1-\eta}\vec{v} =(2+d_{\mathrm{c}} C)\vec{v},
\end{equation}
where $\mathbf{M}$ is a matrix whose elements are shifted with respect to the elements of~$T^{\,\prime}_{\eta}$, $\mathbf{M}:= \frac{2}{1-\eta}T^{\,\prime}_{\eta}$.
The matrix equation Eq.~(\ref{eq:matrix_equation}) corresponds to a system of linear equations for the components $v_k$ of the vector $\vec{v}$, 
\begin{subequations}
\begin{align}
    (1+ E)(v_1 + v_2) + C\sum_{k=1}^{d_{\mathrm{c}}} v_k &= (2+d_{\mathrm{c}} C)v_1, \label{app:Boundary1} \\
    (1- E)v_{k-1} + (1 + E)v_{k+1} + C\sum_{k=1}^{d_{\mathrm{c}}} v_k &= (2+d_{\mathrm{c}} C)v_k \,\,\,\, \text{for}\,\,\,\, 2\leq k\leq d_{\mathrm{c}}-1, \label{app:recurrenceOriginal} \\
    (1- E)(v_{d_{\mathrm{c}}-1} + v_{d_{\mathrm{c}}}) + C\sum_{k=1}^{d_{\mathrm{c}}} v_k &= (2+d_{\mathrm{c}} C)v_{d_{\mathrm{c}}}. \label{app:Boundary2} 
\end{align}
\end{subequations}
Since the vector $\vec{v}$ is a probability distribution, $\sum_k v_k = 1$. The recurrence relation in Eq.~\eqref{app:recurrenceOriginal} simplifies to
\begin{equation}\label{app:recurrence}
    (1- E)v_{k-1} + (1 + E)v_{k+1} + C = (2+d_{\mathrm{c}} C)v_k.
\end{equation}
This is a second-order linear inhomogeneous recurrence relation. We can find its general solution by introducing $v_k^{*} = v_k - 1/d_{\mathrm{c}}$, which transforms the equation into a homogeneous one,
\begin{equation}
    (1- E)v_{k-1}^{*} + (1 + E)v_{k+1}^{*} = (2+d_{\mathrm{c}} C)v_k^{*}.
\end{equation}
The characteristic equation of this recurrence has two roots.
The general solution to a second-order linear homogeneous recurrence relation is a linear combination of terms involving these roots. Therefore, the solution for the homogeneous part, $v_k^{*}$, is of the form $z_1 \lambda_1^{k-1} + z_2 \lambda_2^{k-1}$. Recalling that $v_k = v_k^{*} + 1/d_{\mathrm{c}}$, the general solution for the steady-state probability distribution $v_k$ is
\begin{equation}\label{app:general_form_vk}
    v_k = z_1 \lambda_1^{k-1} + z_2 \lambda_2^{k-1} + \frac{1}{d_{\mathrm{c}}}.
\end{equation}
This establishes the analytical form presented in Proposition~\ref{prop:noisy_limit} in the main text. The remaining task is to determine the explicit expressions for the parameters $\lambda_1, \lambda_2, z_1,$ and $z_2$.


\subsection{General Solution and Parameter Definitions}

{\noindent}The characteristic equation corresponding to the homogeneous recurrence is
\begin{equation}
    (1+E)\lambda^2 - (2+d_{\mathrm{c}} C)\lambda + (1-E) = 0.
\end{equation}
The two roots, which we denote $\lambda_1$ and $\lambda_2$, are given by
\begin{subequations}
\begin{align}
    \lambda_1 &= \frac{2+d_{\mathrm{c}} C + \sqrt{(2+d_{\mathrm{c}} C)^2 - 4(1- E^2)}}{2(1+E)},\\
    \lambda_2 &= \frac{2+d_{\mathrm{c}} C - \sqrt{(2+d_{\mathrm{c}} C)^2 - 4(1- E^2)}}{2(1+E)}.
\end{align}
\end{subequations}
From Vieta's formulas, the product of the roots is independent of the noise term $C$, providing a useful constraint:
\begin{equation}
\lambda_1 \lambda_2 = \frac{1-E}{1+E} = e^{-2\epsilon}.
\end{equation}
The coefficients $z_1$ and $z_2$ are determined by substituting the general solution, Eq.~\eqref{app:general_form_vk}, back into the two boundary condition equations,~\eqref{app:Boundary1} and~\eqref{app:Boundary2}. This yields the explicit expressions for the coefficients, 
\begin{subequations}
\begin{align}
    z_1 &= \frac{E \left(\lambda_2^{d_{\mathrm{c}}}-1\right)}{d_{\mathrm{c}} (\lambda_2-1) (E+1) \left(\lambda_2^{d_{\mathrm{c}}}-\lambda_1^{d_{\mathrm{c}}}\right)}, \label{eq:z1_final} \\
    z_2 &= \frac{E \left(\lambda_1^{d_{\mathrm{c}}}-1\right)}{d_{\mathrm{c}} (\lambda_1-1) (E+1) \left(\lambda_1^{d_{\mathrm{c}}}-\lambda_2^{d_{\mathrm{c}}}\right)}. \label{eq:z2_final}
\end{align}
\end{subequations}
This completes the explicit determination of all parameters in the steady-state solution.


\subsection{Reduction to the Ideal Case}

{\noindent}We now verify that our general solution correctly reduces to the known noiseless limit as $\eta \to 0$. In this limit, $C \to 0$, and the eigenvalues become:
\begin{equation}
    \lambda_1^{(0)} = 1, \qquad \lambda_2^{(0)} = \frac{1-E}{1+E} = e^{-2\epsilon}.
\end{equation}
We can now find the coefficients in this limit. For $z_1$, substituting $\lambda_1 = 1$ into its expression leads to a simplified form (after canceling the $(\lambda_2^{d_{\mathrm{c}}}-1)$ terms), 
\begin{align}
z_1^{(0)} = \frac{E}{d_{\mathrm{c}}(\lambda_2^{(0)}-1)(E+1)}
= \frac{E}{d_{\mathrm{c}}\left(\frac{1-E}{1+E}-1\right)(E+1)}
= -\frac{1}{d_{\mathrm{c}}}.
\end{align}
For $z_2$, we use the normalization constraint $\sum_{k=1}^{d_{\mathrm{c}}} (v_k' - 1/d_{\mathrm{c}}) = 0$. In the ideal limit, this becomes
\begin{align}
    \sum_{k=1}^{d_{\mathrm{c}}} \left(z_1^{(0)} (\lambda_1^{(0)})^{k-1} + z_2^{(0)} (\lambda_2^{(0)})^{k-1}\right) = 
    d_{\mathrm{c}} z_1^{(0)} + z_2^{(0)} \frac{1 - (\lambda_2^{(0)})^{d_{\mathrm{c}}}}{1-\lambda_2^{(0)}} = 0.
\end{align}
Substituting $z_1^{(0)} = -1/d_{\mathrm{c}}$:
\begin{equation}
    -1 + z_2^{(0)} \frac{1 - (e^{-2\epsilon})^{d_{\mathrm{c}}}}{1-e^{-2\epsilon}} = 0 \implies z_2^{(0)} = \frac{1-e^{-2\epsilon}}{1-e^{-2d_{\mathrm{c}}\epsilon}}.
\end{equation}
The final distribution is:
\begin{equation}
    v_k^{(\text{ideal})} = -\frac{1}{d_{\mathrm{c}}} (1)^{k-1} + z_2^{(0)} (e^{-2\epsilon})^{k-1} + \frac{1}{d_{\mathrm{c}}} = z_2^{(0)} (e^{-2\epsilon})^{k-1}.
\end{equation}
This is the ideal cooling limit given in~\cite{Raeisi2019}.


\subsection{Perturbative Analysis and Physical Interpretation}

{\noindent}To understand the impact of small noise, we perform a first-order perturbative expansion of the eigenvalues for $\eta \ll 1$. Let $\lambda_1 = 1 + \delta$, where~$\delta$ is a small correction. Substituting this into the characteristic equation and keeping terms linear in~$\delta$ and~$\eta$ (approximating $d_{\mathrm{c}} C \approx 2\eta$), we find
\begin{subequations}
\begin{align}
    (1+E)(1+2\delta) - (2+2\eta)(1+\delta) + (1-E) &\approx 0, \\[1mm]
    2E\delta - 2\eta &\approx 0, 
\end{align}
\end{subequations}
which implies $\delta \approx \tfrac{\eta}{E}$. 
The eigenvalues to first order are therefore
\begin{equation}
    \lambda_1 \approx 1 + \frac{\eta}{\tanh(\epsilon)} > 1, \qquad
    \lambda_2 \approx e^{-2\epsilon}\left(1 - \frac{\eta}{\tanh(\epsilon)}\right).
\end{equation}
Noise introduces an eigenvalue $\lambda_1 > 1$. By continuity from the ideal case, the corresponding coefficients satisfy $z_1 < 0$ and $z_2 > 0$.

The deviation from the uniform background, $\Delta v_k = v_k - 1/d_{\mathrm{c}}$, must sum to zero:
\begin{equation}
    \sum_{k=1}^{d_{\mathrm{c}}} (z_1 \lambda_1^{k-1} + z_2 \lambda_2^{k-1}) = 0.
\end{equation}
Since $\lambda_1 > 1$, the sum $\sum \lambda_1^{k-1} = \frac{\lambda_1^{d_{\mathrm{c}}} - 1}{\lambda_1 - 1}$ grows exponentially with $d_{\mathrm{c}}$. To satisfy the normalization and positivity ($v_k \ge 0$) constraints simultaneously, both coefficients $|z_1|$ and $z_2$ must be suppressed as $d_{\mathrm{c}}$ increases.

This suppression forces  $\Delta v_k$, to vanish in the large-$d_{\mathrm{c}}$ limit, causing the cooling limit to converge to a high-temperature uniform distribution, $v_k \to 1/d_{\mathrm{c}}$. This contrasts with the ideal case, where increasing the number of computational qubits $n_{\mathrm{c}}$ always improves the cooling performance. Therefore, noise introduces a trade-off, leading to the existence of an optimal number of qubits~$n$ (and thus $n_{\mathrm{c}}$) that maximizes the cooling effect before the noise-induced suppression dominates. 

To validate this analytical prediction and visualize the specific impact of noise accumulation, we compare the GDA model against gate-level physical simulations.
Fig.~\ref{fig:cooling_limit} provides the detailed scaling behavior underlying the summary in Fig. 2 of the main text. The system is initialized with an initial ground-state population of $P_{initial} \approx 0.85$. The figure plots the final ground-state population $P_n$ as a function of $n$ for various error probabilities ($p \in \{10^{-4}, 10^{-6}, 10^{-8}\}$). These results explicitly identify a transition point in the cooling performance: unlike the ideal case, as $n$ increases beyond an optimal point, the accumulated noise causes the performance to degrade.

\begin{figure}
    \centering
    \includegraphics[width=8.6cm]{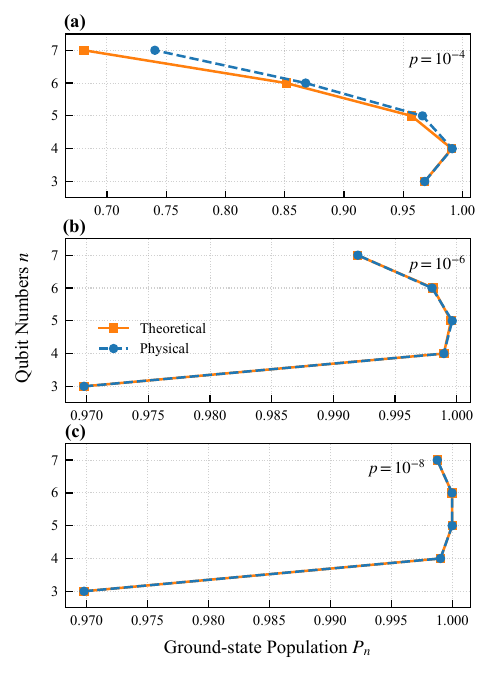} 
    \caption{\label{fig:cooling_limit}
    Cooling performance of the TSAC protocol under noise for different numbers of qubits. 
    The final ground-state population of the target qubit $P_n$ (defined in Sec. III B) is shown as a function of the total number of qubits ($n$) for three fixed CNOT error probabilities: \textbf{(a)} $p = 10^{-4}$, \textbf{(b)} $p=10^{-6}$, and \textbf{(c)} $p=10^{-8}$. 
    All panels compare results from physical noise simulations (circles, dashed lines) and the theoretical GDA model (squares, solid lines).}
\end{figure}

\section{Analysis of Specific Noise Models}
\label{app:specific_noise}

{\noindent}Suppose that a noisy circuit that fulfills our assumptions with $\nTG$ number of two-qubit gates is given. To demonstrate the versatility of the GDA framework presented in Theorem~\ref{thm:gda}, we explicitly calculate the effective depolarizing strength $\eta = p \nTG (1-q)$ for two common physical noise models. The key is to correctly define the $n$-qubit error channel $\mathcal{E}_g$ from its local 2-qubit counterpart and then compute the corresponding parameters $p$ and $q$.

We define the $n$-qubit error channel $\mathcal{E}_g$ as the natural extension of a 2-qubit channel $\mathcal{B}$ acting on a designated pair of qubits. Specifically, $\mathcal{E}_g$ acts as $\mathcal{B}$ on the target pair and as the identity channel $\mathbb{I}$ on the remaining $n-2$ spectator qubits.

\begin{lemma}[GDA Parameter for Two-Qubit Bit-Flip Noise]
\label{lem:bitflip_noise}
Consider a two-qubit bit-flip channel $\mathcal{B}$ with physical error parameter $\chi$, characterized by the Kraus operators
\begin{subequations}
    \begin{align}
        E_{II} &= (1-\chi) I \otimes I, \\[1mm]
        E_{XI} &= \sqrt{\chi(1-\chi)} X \otimes I, \\[1mm]
        E_{IX} &= \sqrt{\chi(1-\chi)} I \otimes X, \\[1mm]
        E_{XX} &= \chi X \otimes X.
    \end{align}
\end{subequations}
When its $n$-qubit extension acts as the error source $\mathcal{E}_g$ for two-qubit gates, the corresponding effective global depolarizing strength is:
\begin{equation}
    \eta_B = (2\chi-\chi^2) \cdot \nTG \cdot \frac{d^2}{d^2-1}.
\end{equation}
\end{lemma}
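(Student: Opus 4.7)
The plan is to invoke Theorem~\ref{thm:gda} directly: since that theorem already expresses $\eta = p\,\nTG(1-q)$ in terms of the single-gate error probability $p$ and the trace parameter $q = [\Tr(\Lambda) - 1]/(d^2-1)$ of the noise superoperator, all I have to do is read off these two numbers from the supplied Kraus decomposition.

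First, I would match the given channel against the standard form $\mathcal{E}_g(\rho) = (1-p)\rho + p\,\Lambda(\rho)$ of Definition~\ref{def:noisy_circuit}. The identity Kraus operator $E_{II} = (1-\chi)\,I\otimes I$ contributes $(1-\chi)^2\rho$ to $\mathcal{B}(\rho)$, so I immediately identify $1-p = (1-\chi)^2$, giving $p = 2\chi - \chi^2$. The remaining three Kraus operators, once rescaled by $1/\sqrt{p}$, are the Kraus operators of the normalized noise channel $\Lambda$; extended trivially to the full $n$-qubit space they act as $\sqrt{\chi(1-\chi)/p}\,X_i$, $\sqrt{\chi(1-\chi)/p}\,X_j$, and $\sqrt{\chi^2/p}\,X_iX_j$, where $X_i$ and $X_j$ denote Pauli $X$ on the two targeted qubits tensored with identities on the $n-2$ spectator qubits.

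Next, I would compute $\Tr(\Lambda)$ using the standard superoperator-trace identity $\Tr(\Lambda) = \sum_k |\Tr(K_k)|^2$, valid for any Kraus representation $\Lambda(\rho) = \sum_k K_k \rho K_k^{\dagger}$; this is immediate by writing the superoperator matrix elements in the computational basis. Because every Kraus operator above contains at least one traceless Pauli $X$ factor tensored with identities, each $\Tr(K_k)$ vanishes, so $\Tr(\Lambda) = 0$. Hence $q = -1/(d^2-1)$ and therefore $1-q = d^2/(d^2-1)$.

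Substituting back into Theorem~\ref{thm:gda} gives $\eta_B = (2\chi - \chi^2)\cdot\nTG\cdot d^2/(d^2-1)$, which is exactly the claim. The reduction is essentially a direct computation; the only point requiring mild care is verifying the Kraus-trace identity and checking that the trivial identity extension from the two targeted qubits to all $n$ qubits does not spoil the argument, since it multiplies each $\Tr(K_k)$ by the factor $2^{n-2}$, which is harmless given that those traces are already zero. I do not foresee any genuine obstacle beyond keeping track of the constant prefactors.
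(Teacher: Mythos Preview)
Your proposal is correct and follows essentially the same approach as the paper: both identify $p = 2\chi - \chi^2$ from the weight of the identity Kraus operator, argue that $\Tr(\Lambda) = 0$ because all non-trivial Kraus operators contain a traceless Pauli factor, and then substitute $q = -1/(d^2-1)$ into Theorem~\ref{thm:gda}. If anything, you are slightly more explicit in justifying $\Tr(\Lambda) = \sum_k |\Tr(K_k)|^2$ and in noting that the identity extension to the spectator qubits is harmless, whereas the paper simply asserts that $\Lambda_g$ is a ``traceless Pauli channel.''
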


\begin{proof}
The 2-qubit channel $\mathcal{B}$ can be written in the form of $\mathcal{E}_g(\rho_{2q}) = (1-p_{2q})\rho_{2q} + p_{2q}\Lambda_{2q}(\rho_{2q})$, where the local error probability is $p_{2q} = 1 - (1-\chi)^2 = 2\chi - \chi^2$. 

When extended to the $n$-qubit system, the error probability $p$ remains the same, as an error either occurs on the 2-qubit subsystem or it does not: $p = p_{2q} = 2\chi - \chi^2$.
The $n$-qubit normalized error channel is $\Lambda_g = \Lambda_{2q} \otimes \mathbb{I}_{n-2}$, where $\Lambda_{2q}$ is the normalized mixture of the non-identity Pauli error channels corresponding to the Kraus operators $E_{XI}, E_{IX}, E_{XX}$. Since $\Lambda_g$ is a traceless Pauli channel, i.e., $\Tr(\Lambda_g)=0$, according to Theorem~\ref{thm:gda}, we have the effective global depolarizing strength
\begin{align}
    \eta_B &= p \cdot \nTG \cdot (1-q) 
    \,=\, (2\chi-\chi^2) \cdot \nTG \cdot \left(1 - \frac{-1}{d^2-1}\right) 
    \,=\, (2\chi-\chi^2) \cdot \nTG \cdot \frac{d^2}{d^2-1}.
\end{align}
This completes the proof.
\end{proof}

\begin{lemma}[GDA Parameter for Timekeeping Noise]
\label{lem:timekeeping_noise}
Consider the 2-qubit timekeeping noise channel associated with a CNOT gate, described by a physical parameter $p$ and the Kraus operators~\cite{Xureb2023}
\begin{subequations}
 \begin{align}
         A_1 &=\sqrt{1-p} \ I\otimes I,\\[1mm]
         A_2 &=\sqrt{p}\,(|1\rangle\!\langle1|\otimes X +|0\rangle\!\langle0|\otimes I).
 \end{align}
\end{subequations}
When its $n$-qubit extension acts as the error source $\mathcal{E}_g$, the corresponding effective global depolarizing strength is
\begin{equation}
    \eta_{\hspace*{0.75pt}\scriptscriptstyle\mathrm{T}} = p \cdot \nTG \cdot \frac{3d^2/4}{d^2 - 1}.
\end{equation}
\end{lemma}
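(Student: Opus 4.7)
The plan is to mirror the proof of Lemma~\ref{lem:bitflip_noise}: recast the channel in the canonical form of Definition~\ref{def:noisy_circuit}, identify the local error probability and the normalized noise superoperator, compute its trace on the full $n$-qubit space, and then substitute into Theorem~\ref{thm:gda}.

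First I would observe that the operator $B := |0\rangle\!\langle 0|\otimes I + |1\rangle\!\langle 1|\otimes X$ appearing in $A_2$ is precisely the CNOT gate. Using $X^2 = I$ together with the orthogonality of the two projectors, one immediately verifies $B^\dagger B = |0\rangle\!\langle 0|\otimes I + |1\rangle\!\langle 1|\otimes I = I\otimes I$, so $B$ is unitary. The Kraus sum $A_1\rho A_1^\dagger + A_2\rho A_2^\dagger$ therefore collapses to $(1-p)\rho + p\,B\rho B^\dagger$, putting the channel into the canonical form of Definition~\ref{def:noisy_circuit} with local error probability~$p$ and normalized noise $\Lambda_{2q}(\rho) = B\rho B^\dagger$. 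As in Lemma~\ref{lem:bitflip_noise}, the $n$-qubit extension reads $\Lambda_g = \Lambda_{2q}\otimes\mathbb{I}_{n-2}$ (with $\mathbb{I}_{n-2}$ the identity superoperator on the $n-2$ spectator qubits), and the error probability $p$ is unchanged by the extension.

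Next I would compute $\Tr(\Lambda_g)$, which is the only quantity needed to invoke Theorem~\ref{thm:gda}. For a unitary channel $\mathcal{U}(\rho) = U\rho U^\dagger$, the superoperator trace satisfies the standard identity $\Tr(\mathcal{U}) = |\Tr(U)|^2$, which follows immediately from the vectorized representation $U\otimes U^{*}$ (equivalently, from the Pauli-transfer-matrix formulation). Since $\Tr(\mathrm{CNOT}) = 2$ in the four-dimensional computational basis of two qubits, we get $\Tr(\Lambda_{2q}) = 4$. The identity superoperator on the $n-2$ spectator qubits contributes a factor $\Tr(\mathbb{I}_{n-2}) = (2^{n-2})^{2} = 4^{n-2}$, and by multiplicativity of the trace under tensor products, $\Tr(\Lambda_g) = 4\cdot 4^{n-2} = 4^{n-1} = d^{2}/4$.

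Finally I would assemble the result. Plugging into the definition $q = (\Tr(\Lambda_g)-1)/(d^{2}-1)$ gives $q = (d^{2}/4 - 1)/(d^{2}-1)$ and hence $1-q = (3d^{2}/4)/(d^{2}-1)$. Theorem~\ref{thm:gda} then yields $\eta_{\hspace*{0.75pt}\scriptscriptstyle\mathrm{T}} = p\cdot\nTG\cdot(1-q) = p\cdot\nTG\cdot\frac{3d^{2}/4}{d^{2}-1}$, as claimed. The main (and essentially only) conceptual step is the recognition that $A_{2}/\sqrt{p}$ is unitary---namely, the CNOT gate itself---which reduces the whole problem to a routine superoperator-trace calculation; no additional 2-design or twirling argument is required beyond what is already absorbed into Theorem~\ref{thm:gda}.
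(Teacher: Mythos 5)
Your proof is correct and follows essentially the same route as the paper's: both identify the local error probability as $p$, treat the normalized noise as a single-Kraus-operator channel, use $\Tr(\Lambda_g)=|\Tr(K)|^2$ to obtain $d^2/4$, and substitute into Theorem~\ref{thm:gda}. The only cosmetic difference is that you explicitly note $A_2/\sqrt{p}$ is the CNOT unitary and factorize the superoperator trace over the two-qubit and spectator parts, whereas the paper computes $\Tr(K)=d/2$ directly; the arithmetic is identical.
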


\begin{proof}
For the Kraus operators of the timekeeping noise, the local error probability is $ p$, and consequently, the $n$-qubit error probability is $ p$.
The normalized $n$-qubit error channel is $\Lambda_g(\rho) = \frac{1}{p} (A_2 \otimes I_{n-2}) \rho (A_2^\dagger \otimes I_{n-2})$. This channel has a single Kraus operator, $K = (1/\sqrt{p}) A_2 \otimes I_{n-2} = K_{2q} \otimes I_{n-2}$, where $K_{2q} := |1\rangle\!\langle1|\otimes X +|0\rangle\!\langle0|\otimes I$.

To calculate the effective depolarizing probability, we compute the trace of the superoperator $\Lambda_g$. For a channel with a single Kraus operator $K$, this trace is given by $\Tr(\Lambda_g) = |\Tr(K)|^2$. The trace of the $n$-qubit operator $K$ reads
\begin{align*}
    \Tr(K) &= \Tr(K_{2q} \otimes I_{n-2}) = \frac{d}{2}.
\end{align*}
Therefore, the trace of the superoperator is $\Tr(\Lambda_g) = (d/2)^2 = d^2/4$. Now, we compute the parameter $q$ using its definition,
\begin{equation}
    q = \frac{\Tr(\Lambda_g) - 1}{d^2 - 1} = \frac{d^2/4 - 1}{d^2 - 1}.
\end{equation}
Finally, we compute the factor $(1-q)$, 
\begin{equation}
    1 - q = 1 - \frac{d^2/4 - 1}{d^2 - 1}= \frac{3d^2/4}{d^2 - 1}.
\end{equation}
Substituting $p_g = \alpha$ and this value of $(1-q)$ into the definition of~$\eta$ gives the final result, 
\begin{equation}
    \eta_{\hspace*{0.75pt}\scriptscriptstyle\mathrm{T}} = p \cdot \nTG \cdot (1-q) = p \cdot \nTG \cdot \frac{3d^2/4}{d^2 - 1}.
\end{equation}
This completes the proof.
\end{proof}


\section{Numerical Implementation Details}
\label{app:supplementary_results}


\subsection{Numerical Implementation of TSAC}
\label{app:simulation_details}

{\noindent}In our numerical simulations, we used Qiskit~\cite{javadiabhari2024}. The TSAC compression unitary $U_{\mathrm{TS}}$ was constructed for a total of~$n$ qubits (comprising $n_{\mathrm{c}} = n-1$ computational/target qubits and one reset qubit). To generalize to arbitrary~$n$, we employed the recursive ladder structure detailed in Algorithm~\ref{alg:build_U}. The circuit uses a sequence of multi-controlled X (MCX) gates. Crucially, since decomposing an MCX gate with $k$ controls requires $\mathcal{O}(k)$ or more physical CNOT gates, the total gate count $\nTG$ grows rapidly with~$n$, directly driving the increase in the effective noise parameter~$\eta$.
     
\begin{algorithm}[H]
\caption{Construction of the TSAC compression circuit $U_{\mathrm{TS}}$ for~$n$ qubits}
\label{alg:build_U}
\begin{algorithmic}[1]
\Require Number of qubits~$n$ (indices $0, 1, \dots, n-1$).
\Ensure Quantum Circuit $U_{\mathrm{TS}}$.
\State Initialize a quantum circuit with~$n$ qubits.
\State Apply $X$ gate on the reset qubit $q_{n-1}$.

\Statex \textit{// Step 1: Down-ladder of MCX gates}
\For{$i = n-2$ \textbf{down to} $0$}
    \State Define controls $C = \{q_{i+1}, \dots, q_{n-1}\}$.
    \State Apply Multi-Controlled $X$ gate: $\text{MCX}(C \to q_i)$.
\EndFor

\Statex \textit{// Step 2: Top MCX gate}
\If{$n > 1$}
    \State Define controls $C_{\text{all}} = \{q_0, \dots, q_{n-2}\}$.
    \State Apply $\text{MCX}(C_{\text{all}} \to q_{n-1})$.
\EndIf

\State Apply $X$ gate on the reset qubit $q_{n-1}$.

\Statex \textit{// Step 3: Up-ladder of MCX gates }
\For{$i = 0$ \textbf{to} $n-2$}
    \State Define controls $C = \{q_{i+1}, \dots, q_{n-1}\}$.
    \State Apply $\text{MCX}(C \to q_i)$.
\EndFor

\State Apply $X$ gate on the reset qubit $q_{n-1}$.
\State \textbf{return} Circuit
\end{algorithmic}
\end{algorithm}

Figure~\ref{fig:tsac_circuit_n3_app} shows the circuit for $n=3$ (i.e., $n_{\mathrm{c}}=2$), which is the smallest system analyzed in Fig. 2 of the main text.
    
\begin{figure}[h!]
    \centering
    \includegraphics[scale=1.4]{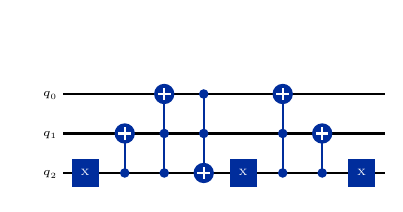}
    \caption{The $n=3$ TSAC compression circuit $U_{\mathrm{TS}}$ used in the numerical simulation~\cite{Shende2024}. As indicated by the labels, $q_0$ is the Target qubit, $q_2$ is the Refresh qubit, and $q_1$ is a computational qubit.}
    \label{fig:tsac_circuit_n3_app}
\end{figure}

To calculate the gate counts and run the noise simulations, this $U_{\mathrm{TS}}$ circuit (and all other $n$-qubit circuits) was transpiled using \texttt{qiskit.transpile} to the basis gate set $\{\texttt{cx}, \texttt{sx}, \texttt{rz}\}$.

The GDA effective noise strength~$\eta$ is given by Lemma~\ref{lem:timekeeping_noise}:
\begin{equation}
    \eta_{\hspace*{0.75pt}\scriptscriptstyle\mathrm{T}} = p \cdot \nTG \cdot \frac{3d^2/4}{d^2 - 1}
    \nonumber
\end{equation}
We illustrate each term using the $n=3$ case as an example:
\begin{itemize} 
    \item{$p$ is the physical CNOT error probability.}
    \item{$d$ is the total Hilbert-space dimension. For $n=3$ total qubits, $d  = 2^3$.}
    \item{$\nTG$ is the total number of CNOT (\texttt{cx}) gates after transpiling $U_{\mathrm{TS}}$ to the basis set. For the $n=3$ circuit (Fig.~\ref{fig:tsac_circuit_n3_app}), we count $\nTG = 20$.}
    \item{The dimension-dependent factor is $\frac{3d^2/4}{d^2-1} =  \approx 0.7619$ with $d=8$.}
\end{itemize}
Thus, for $n=3$ and a given error
probability, e.g., $p = 10^{-3}$, the effective GDA strength $\eta_{\hspace*{0.75pt}\scriptscriptstyle\mathrm{T}}$ is:
\begin{equation}
    \eta_{\hspace*{0.75pt}\scriptscriptstyle\mathrm{T}} = 10^{-3} \cdot 20 \cdot 0.7619 \approx 1.52 \times 10^{-2}
    \nonumber
\end{equation}

\subsection{Details of the DC Mirror Protocol}
{\noindent}We specifically consider the \textit{Mirror protocol} for the unitary $U_{\text{DC}}$, as proposed in~\cite{Oftelie2024}. The construction of this unitary is based on the permutation of computational basis states. 

Let $x = x_1 x_2 \dots x_n$ be a binary string of length $n$, representing a basis state $|x\rangle$ in the computational basis. We define its mirror string $\bar{x}$ as the bitwise negation of $x$. For example, for a 3-qubit system, the mirror of $001$ is $110$.

The unitary $U_{\text{DC}}$ can be mathematically decomposed as a product of independent SWAP operations between these mirror pairs. Formally, we write the unitary as:
\begin{equation}\label{eq:udc}
    U_{\text{DC}} = \prod_{x \in \mathcal{S}} \text{SWAP}(|x\rangle, |\bar{x}\rangle),
\end{equation}
where $\text{SWAP}(|x\rangle, |\bar{x}\rangle)$ is the operator that exchanges the amplitudes of the states $|x\rangle$ and $|\bar{x}\rangle$, and acts as the identity on all other states.

The set $\mathcal{S}$ contains the pairs of states that require swapping. The rule to construct $\mathcal{S}$ is determined by the Hamming weight of the string and the state of the target qubit. Let $w(x)$ denote the Hamming weight (number of $1$s) of the string $x$. In a thermal state, a basis state with a lower Hamming weight has lower energy and, consequently, a higher population. To cool the target qubit (indexed as $1$), we want to map these high-population states to states where the target qubit is in the ground state $|0\rangle$.

Therefore, for every pair $\{x, \bar{x}\}$, we compare their weights. Assume $w(x) < w(\bar{x})$. We apply the swap rule:
\begin{itemize}
    \item If the target qubit of the lower-weight state $|x\rangle$ is in the excited state $|1\rangle$ (i.e., $x_1 = 1$), we include this pair in $\mathcal{S}$ and perform the swap.
    \item Otherwise, if the target qubit is already in $|0\rangle$, no operation is applied (identity).
\end{itemize}
This construction ensures that the states with higher initial populations are mapped to the subspace where the target qubit is in the ground state.
We now apply the GDA framework to analyze the performance of the DC protocol in the presence of noise.
When the unitary $U_{\text{DC}} = \prod_{x \in \mathcal{S}} \text{SWAP}(|x\rangle, |\bar{x}\rangle)$ as shown in Eq.~\eqref{eq:udc} is decomposed into elementary gates and implemented as a quantum circuit $\hat{\mathcal{U}}_{\,\text{DC}}$, the gate errors inevitably introduce noise. We denote the noisy circuit superoperator as $\hat{\mathcal{U}}^{\,\prime}_{\,\text{DC}}$. The final state of the target qubit after the noisy process is given by tracing out the auxiliary qubits:
\begin{equation}
    \rho^{\,\prime}_{\text{out}} = \Tr_{2,\ldots,n} \left( \hat{\mathcal{U}}^{\,\prime}_{\,\text{DC}} \left(\rho_{\text{th}}^{\otimes n} \right) \right).
\end{equation}

\begin{figure}[b]
    \centering
    \includegraphics[scale=1.8]{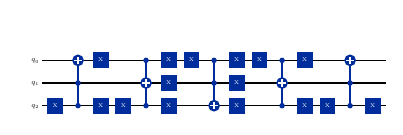}
    \caption{The $n=3$ DC circuit $U_{\mathrm{DC}}$ used in the numerical simulation. As indicated by the labels, $q_0$ is the target qubit.}
    \label{fig:DC_circuit_n3_app}
\end{figure}
Under the assumptions of the GDA, the noisy circuit can be approximated as the ideal unitary followed by a global depolarizing channel $\mathcal{C}_\eta$:
\begin{equation}
    \hat{\mathcal{U}}^{\,\prime}_{\,\text{DC}} \approx \mathcal{C}_\eta \circ \hat{\mathcal{U}}_{\,\text{DC}},
\end{equation}
where $\eta$ is the effective noise parameter derived from the circuit structure. 
Substituting this into the expression for the output state, we have:
\begin{equation}
    \rho^{\,\prime}_{\text{out}} \approx \Tr_{2,\ldots,n} \left( \mathcal{C}_\eta \left( \hat{\mathcal{U}}_{\,\text{DC}}\left(\rho_{\text{th}}^{\otimes n} \right) \right) \right).
\end{equation}
Recall that the action of the global depolarizing channel on an $n$-qubit state $\rho$ is $\mathcal{C}_\eta(\rho) = (1-\eta)\rho + \eta \frac{\mathbb{I}}{2^n}$.
Since the partial trace is a linear operation, the noisy output state of the target qubit is simply a depolarized version of the ideal output:
\begin{equation}
    \rho^{\,\prime}_{\text{out}} = (1-\eta)\rho_{\text{out}} + \eta \frac{\mathbb{I}}{2}.
\end{equation}
The quantum circuit for $\hat{\mathcal{U}}_{\,\text{DC}}$ used in the simulation is generated by the open-source Python library established in Ref.~\cite{Giuliano2025}. Figure~\ref{fig:DC_circuit_n3_app} shows the circuit for the $n=3$ case. This circuit is transpiled using \texttt{qiskit.transpile} to the basis gate set $\{\texttt{cx}, \texttt{sx}, \texttt{rz}\}$.

\end{document}